\documentclass[11pt]{article} 

\usepackage[utf8x]{inputenc}
\usepackage{amsthm}
\usepackage{amsmath}
\usepackage{amsfonts}
\usepackage{amssymb}
\usepackage{paralist}
\usepackage{graphicx}
\usepackage{a4wide}
\usepackage[margin=2.8cm,nohead]{geometry}
 
\usepackage{comment}
\usepackage{color}
\usepackage{verbatim}
\usepackage{xspace}
\usepackage{algorithmicx}
\usepackage{algorithm}
\usepackage{xspace}
\usepackage{wrapfig}
\usepackage{enumitem}
\usepackage[export]{adjustbox}
\usepackage[noend]{algpseudocode}
\usepackage[blocks]{authblk}
\usepackage[unicode]{hyperref}


\def\C{{\mathcal{C}}}
\def\calD{{\mathcal{D}}}
\def\calE{{\mathcal{E}}}

\def\O{{\mathcal{O}}}

\def\S{{\mathcal{S}}}
\let\epsilon=\varepsilon
\def\I{\it\aftergroup\/}

\def\text#1{\hbox{#1}}
\let\t=\mbox
\def\set{\mbox{set}}
\def\unset{\mbox{unset}}
\def\false{\mbox{false}}
\def\D{{\sc Drawer}\xspace}
\def\P{{\sc Painter}\xspace}
\def\GR{{\sc Greedy}\xspace}
\def\WIN{{\sc Winning}\xspace}
\def\VIRTINIT{{\sc InitSimulation}\xspace}
\def\VIRT{{\sc ColorBySimulation}\xspace}
\def\WFD{{\sc WaitFor}$D$\xspace}
\def\FF{{\sc FirstFit}\xspace}
\def\uvertex{$\forall$-vertex\xspace}
\def\evertex{$\exists$-vertex\xspace}
\def\uvertices{$\forall$-vertices\xspace}
\def\evertices{$\exists$-vertices\xspace}

\def\enumparams{\itemsep1pt \parskip0pt}
\def\indentskip{\hskip 1.5em}

\algnewcommand{\IIf}[1]{\State\algorithmicif\ #1\ \algorithmicthen}
\algnewcommand{\IElse}{\algorithmicelse\ }
\algnewcommand{\EndIIf}{\unskip\ \algorithmicend\ \algorithmicif}

 \PrerenderUnicode{č}\PrerenderUnicode{š}\PrerenderUnicode{ž}\PrerenderUnicode{ř}
 \PrerenderUnicode{í}\PrerenderUnicode{á}\PrerenderUnicode{ý}\PrerenderUnicode{ť}
 \PrerenderUnicode{ď}\PrerenderUnicode{ě}\PrerenderUnicode{é}\PrerenderUnicode{ö}

\newcommand{\onlinecol}{\textsc{Online Graph Coloring}\xspace}
\newcommand{\onlinechrom}{\textsc{Online Chromatic Number}\xspace}
\newcommand{\onlineprecoloring}{\textsc{Online Chromatic Number with Precoloring}\xspace}
\newcommand{\virt}[1]{#1_{\rm{virt}}}
\newcommand\boldsection[1]{\medskip \noindent\textbf{#1}}

\begin{document}

\lefthyphenmin=2
\righthyphenmin=2

\newtheorem*{define}{Definition}
\newtheorem{theorem}{Theorem}[section]
\newtheorem{lemma}[theorem]{Lemma}
\newtheorem{claim}[theorem]{Claim}
\newtheorem{observation}[theorem]{Observation}
\newtheorem{corollary}[theorem]{Corollary}
\newtheorem{proposition}[theorem]{Proposition}

\theoremstyle{definition}
\newtheorem{definition}{Definition}

\long\def\algobox#1{\smallskip
  \noindent
~~\hbox{\fbox{\parbox[c]{0.95\textwidth}{\small #1}}}
\smallskip
}


\date{}
\title{Online Chromatic Number is PSPACE-Complete\thanks{
Supported by project 17-09142S of GA \v{C}R and by the GAUK project 634217.
A preliminary version of this work appeared in~\cite{BohVel16}.
}}
\author{Martin Böhm}
\author{Pavel Veselý}
\affil{Computer Science Institute of Charles University, Prague, Czech Republic.
\texttt{\{bohm,vesely\}@iuuk.mff.cuni.cz}.}

\maketitle

\begin{abstract}
In the online graph coloring problem, vertices from a graph $G$, known in advance, arrive
in an online fashion and an algorithm must immediately assign a color to each incoming vertex $v$
so that the revealed graph is properly colored. The exact location of $v$ in the graph $G$
is not known to the algorithm, since it sees only previously colored neighbors of $v$.
The \emph{online chromatic number} of $G$ is the smallest number
of colors such that some online algorithm is able to properly color $G$ for any incoming order.
We prove that computing the online chromatic number of a graph is PSPACE-complete.
\end{abstract}

\section{Introduction} \label{sec:intro}
In the classical graph coloring problem we assign a color to each vertex of
a given graph such that the graph is properly colored, i.e., no two adjacent vertices have the same color.
The chromatic number $\chi$ of a graph $G$ is the smallest $k$ such that
$G$ can be colored with $k$ distinct colors.
Deciding whether the chromatic number of a graph is at most $k$ is well known to be NP-complete,
even in the case with three colors. 


The online variant of graph coloring can be defined as follows: The
vertices of $G$ arrive one by one, and an online algorithm must color
vertices as they arrive so that the revealed graph is properly colored
at all times. When a vertex arrives, the algorithm sees edges to
previously colored vertices. The online algorithm may use additional knowledge of the whole graph $G$;
more precisely, a copy of $G$ is sent to the algorithm at the start of the input. However, the exact
correspondence between the incoming vertices and the vertices of the
copy of $G$ is not known to the algorithm. This problem is called \onlinecol.

In this paper we focus on a graph parameter called \emph{online chromatic number}
which is analogous to the standard chromatic number of a graph.

\begin{definition}
The \emph{online chromatic number} $\chi^O(G)$ of a graph $G$ is the
smallest number $k$ such that there exists a deterministic online
algorithm which is able to color the specified graph $G$ using $k$
colors for any incoming order of vertices.
\end{definition}

The online chromatic number has been studied since 1990
\cite{GryLeh88}. One of the main open problems in the area is the
computational complexity of deciding whether $\chi^O(G)\leq k$ for a
specified simple graph $G$, given $G$ and $k$ on input; see
e.g.\ Kudahl \cite{kudahl14}. 

\begin{definition}
The \onlinechrom problem is as follows: 

\noindent \textbf{Input:} An undirected simple graph $G$ and an integer $k$.

\noindent \textbf{Goal:} Decide whether $\chi^O(G)\leq k$.
\end{definition}

In this paper, we fully resolve the computational complexity of this problem:

\begin{theorem}\label{thm:main}
The decision problem \onlinechrom is PSPACE-complete.
\end{theorem}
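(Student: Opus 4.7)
I split the theorem into the PSPACE upper bound and PSPACE-hardness.

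\textbf{Membership in PSPACE.}
View the question $\chi^O(G)\le k$ as the value of a two-player game in which \D successively designates the next arriving vertex (equivalently, commits the next entry of a bijection between arrivals and $V(G)$) and \P immediately colors it from a palette of $k$ colors. The game tree has depth $|V(G)|$ and every state (partial bijection plus partial coloring) admits a polynomial encoding, so its min--max value can be computed by the standard depth-first recursion in polynomial space, placing \onlinechrom in PSPACE.

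\textbf{PSPACE-hardness.}
I would reduce from \textsc{TQBF}. Given a quantified formula $\Phi = Q_1 x_1 \cdots Q_n x_n\,\varphi$ with $\varphi$ in 3-CNF, the plan is to construct in polynomial time a graph $G_\Phi$ and a budget $k$ so that \P has a winning $k$-coloring strategy on $G_\Phi$ iff $\Phi$ is true. The quantifier alternation is encoded by two families of gadgets: for each universal $Q_i=\forall$ a \uvertex gadget in which \D's moves force the truth value of $x_i$, and for each existential $Q_i=\exists$ an \evertex gadget on which \P's color choice records the truth value. The quantifier order is enforced by layering these gadgets with high-degree connectors so that \D has no incentive to shuffle them. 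Clause gadgets are wired to the colors used on the literal-vertices of their three variables so that any assignment falsifying $\varphi$ forces a $(k{+}1)$-st color, while every satisfying assignment admits a legal $k$-coloring. The preamble macros \VIRTINIT and \VIRT suggest that a key technical device is a \emph{virtual simulation}: \P decides the color of many auxiliary vertices by simulating \FF on a virtual copy of the remaining graph, which lets one compose gadgets and argue inductively along the quantifier prefix.

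\textbf{Main obstacle.}
The hard part is making each gadget \emph{cheat-proof}, so that the graph game faithfully simulates the QBF game. Any non-canonical move must be either equivalent to a canonical one or immediately losing for the deviating side. For \D, this is typically achieved by padding each quantifier gadget with near-cliques of size $k-O(1)$, so that departing from the prescribed arrival order leaves \P with too much color freedom; on the padding \P plays \FF, whose behaviour is easy to analyse. For \P, the consistency of the encoded truth value across all occurrences of $x_i$ must be enforced by sharing color classes between the \evertex gadget of $x_i$ and every clause containing $x_i$, so that an inconsistent choice can later be punished by \D via an appropriately chosen clause gadget. Once both robustness properties are verified, a routine induction along $Q_1 \cdots Q_n$ establishes the equivalence between the two games. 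The polynomial size of $G_\Phi$ and the polynomial-time computability of the reduction are straightforward; I expect the bulk of the work, and the main obstacle, to be the gadget design together with the case analysis showing that every deviation is either neutral or fatal.
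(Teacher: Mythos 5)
Your membership-in-PSPACE argument is sound and essentially matches the paper's: bounded-depth game tree, polynomially encodable states, minimax by DFS. No issues there.

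The hardness sketch, however, has a genuine gap: it never confronts the central difficulty, which is that with \emph{no} precolored vertices \P has no anchor whatsoever. When the first vertex arrives, \P cannot tell whether it is a variable vertex, a clause vertex, a ``connector'', or part of your padding clique, because every vertex looks identical (isolated). Your plan says \P ``simulates \FF on a virtual copy of the remaining graph'', but a virtual copy is useless until \P knows \emph{which} part of $G_\Phi$ the revealed subgraph corresponds to, and you give no mechanism for that. ``Padding each quantifier gadget with near-cliques of size $k-O(1)$'' and ``high-degree connectors'' are not such a mechanism: without a labelled reference frame, distinct near-cliques are indistinguishable from one another and from cliques inside the clause or variable gadgets, so \D can freely scramble the arrival order and defeat any strategy that relies on \P recognizing which gadget it is in. The paper's route is quite different: it first proves hardness of \onlineprecoloring with a large precolored clique $K_{col}$ (where recognition is trivial), then shrinks the precolored part to logarithmic size via a ``node'' structure, and then removes the precolored vertices \emph{one at a time} via a ``supernode'' gadget, multiplying the graph by a constant factor each of $O(\log n)$ times. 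The supernode removal is where the real work happens --- \P waits (playing \FF) until two large practically disjoint cliques and two nonadjacent vertices of $D$ have arrived, at which point it can finally classify the revealed vertices and begin the $\virt{G}$ simulation, and a delicate accounting argument (Claim~\ref{clm:ColorsFromDinC}) shows that the colors \FF ``wasted'' on $D$ during the waiting period get absorbed by the clique $C$. None of that machinery appears in your sketch, and I don't see how the inductive argument you promise along $Q_1\cdots Q_n$ could go through without it.

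A secondary point: the paper reduces from \textsc{Q3DNF-SAT}, not from 3-CNF \textsc{TQBF}. This is not merely cosmetic --- the DNF form lets the final vertex $F$ test ``all clauses false'' by requiring one new color, which is the natural coloring-theoretic implementation of ``the formula is false''; you would need to redesign the clause/final gadget for CNF. That is fixable, but it is another place where your sketch would need concrete work.
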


As is usual in the online computation model, we can view \onlinecol as a game
between two players, which we call \P (representing the online algorithm) and \D (often
called \textsc{Adversary} in the online algorithm literature). 
In each round \D chooses an uncolored vertex $v$ from $G$
and sends it to \P without without any information to which vertex of $G$
it corresponds, only revealing the edges to the previously sent vertices.
Then \P must properly color (``paint'') $v$,
i.e., \P cannot use a color of a neighbor of $v$.
We stress that in this paper \P is restricted to be deterministic.
The game continues with the next round until all vertices of $G$ are colored.


\boldsection{Examples.} Consider a path $P_4$ on four vertices. Initially, \D sends
two nonadjacent vertices. If \P assigns different colors to them, then
these are the first and the third vertex of $P_4$, thus the second vertex must
get a third color; otherwise they obtained the same color $a$ and they are the endpoints
of $P_4$, therefore the second and the third vertex get different colors which are not equal to $a$.
In both cases, there are three colors on $P_4$ and thus $\chi^O(P_4)=3$, while $\chi(P_4)=2$.

Note also that we may think of \D deciding where an incoming vertex
belongs at some time \textit{after} it is colored provided that \D's
choice still allows for at least one isomorphism to the original
$G$. This is possible only for a deterministic \P.

A particularly interesting class of graphs in terms of $\chi^O$ is the
class of binomial trees.  A binomial tree of order $k$, denoted $B_k$, is defined
inductively: $B_0$ is a single vertex (the
root) and $B_k$ is created by taking two
disjoint copies of $B_{k-1}$, adding an edge
between their roots and choosing one of their roots as the root for
the resulting tree.  Thus $P_4$ with the root on the second vertex is $B_2$.

It is not hard to generalize the example of $P_4$
and show $\chi^O(B_k) = k+1$~\cite{GryLeh88}.
This shows that the ratio between $\chi^O$ and $\chi$ may be
arbitrarily large even for the class of trees.

\boldsection{History.}
The online problem \onlinecol
has been known since 1976 \cite{bean76}, originally studied in the
variant where the algorithm has no extra information at the start of
the input. Bean~\cite{bean76} showed that no online algorithm that is compared
to an offline algorithm can perform well under this metric.
The notion of online chromatic number was first defined in
1990 by \cite{GryLeh88}.

For the online problem, Lovász, Saks and Trotter \cite{lst} show an
algorithm with a \emph{competitive ratio} $O(n / \log^* n)$, where the
competitive ratio is the ratio of the number of colors used by the
online algorithm to the (standard) chromatic number. This was later
improved to $O(n \log \log \log n / \log \log n)$ by Kierstad \cite{kierstad98}
using a deterministic algorithm. 
There is a better $O(n / \log n)$-competitive randomized algorithm against an
oblivious adversary by Halldórsson~\cite{halldorsson97}.
A lower bound on the competitive ratio of
$\Omega(n/\log^2n)$ even for randomized algorithms against an oblivious
adversary was shown by Halldórsson and Szegedy~\cite{hallszegedy94}.

Our variant of \onlinecol, where the algorithm receives a copy of the
graph at the start, was suggested by Halldórsson \cite{halldorsson00},
where it is shown that the lower bound $\Omega(n/\log^2n)$ also holds
in this model. (Note that the previously mentioned algorithmic results
are valid for this model also.)

Kudahl~\cite{kudahl13thesis} recently studied \onlinechrom as a
complexity problem. The paper shows that the problem is coNP-hard and
lies in PSPACE.  Later~\cite{kudahl14} he proved that if some part of
the graph is precolored, i.e., some vertices are assigned some colors
prior to the coloring game between \D and \P and \D also reveals edges
to the precolored vertices for each incoming vertex, then deciding
whether $\chi^O(G) \leq k$ is PSPACE-complete.  We call this decision
problem \onlineprecoloring.  The paper~\cite{kudahl14} conjectures
that \onlinechrom (with no precolored part) is PSPACE-complete
too. Interestingly, it is possible to decide $\chi^O(G) \leq 3$ in
polynomial time~\cite{GryKirLeh93}.

\boldsection{Related work.} Deciding the outcome of many two-player games is PSPACE-complete;
among those are (generalizations of) Amazons, Checkers and Hex, to name a few.

The closest PSPACE-complete two-player game to our model is arguably
\textsc{Sequential Coloring Game}, where two players color vertices in
a fixed order and the first player to use more than $k$ colors loses
the game. This game was defined and analyzed by
Bodlaender~\cite{Bodlaender91}.

However, in all of the games mentioned above, both players have roughly
the same power. 
This does not hold for \onlinecol which is highly asymmetric, since \D has perfect
information (knows which vertices are sent and how they are colored),
but \P does not. \P may only guess to which part of the graph does the
colored subgraph really belong.  This is the main difficulty in
proving PSPACE-hardness.

An example of an asymmetric two-player game somewhat related to
\onlinecol is the \textsc{Online Ramsey Game} in which Builder draws
edges and Painter colors each edge either red, or blue.  Builder wins
if it forces a monochromatic copy of a graph $H$, otherwise Painter
wins. The condition for Builder is that at the end the revealed graph
must belong to a certain class of graphs.  Deciding whether Builder
wins was recently shown to be PSPACE-complete~\cite{DvoVal15},
however, the proof assumes that some of the edges may be precolored.

Another recently studied asymmetric model is the
\textsc{Chooser-Picker Positional Game}. In it, the player Chooser
selects a pair of objects, and the player Picker selects one of them
for itself, leaving the other object for the player Chooser. The
winning condition is then similar to Maker-Breaker games, such as
\textsc{Online Ramsey Game} described above. A recent
paper~\cite{CMP16} proves NP-hardness for this problem, but
PSPACE-hardness remains open.

\boldsection{Proof outline.} The fact that \onlinechrom belongs to PSPACE is not hard to see:
The online coloring is represented by a game tree which is evaluated using the Minimax algorithm.
This can be done in polynomial space,
since the number of rounds in the game is bounded by $n$, i.e., the number of vertices,
and possible moves of each player can be enumerated in polynomial space: 
\P has at most $n$ possible moves, because it either uses a color already used for a vertex, or it chooses a new color,
and \D has at most $2^{s}$ moves where $s$ is the number of colored vertices,
since it chooses which colored vertices shall be adjacent to the incoming vertex.
\D must ensure that sent vertices form an induced subgraph of $G$,
but this can be checked in polynomial space.

Inspired by \cite{kudahl14}, we prove the PSPACE-hardness of \onlinechrom
by a reduction from {\sc Q3DNF-SAT}, i.e., the satisfiability of a
fully quantified formula in the 3-disjunctive normal form (3-DNF).  An
example of such a formula is

$$∀x_1 ∃x_2 ∀x_3 ∃x_4 …: (x_1 \wedge x_2 \wedge \neg x_3) \vee (\neg x_1\wedge x_2\wedge \neg x_4) \vee \dots$$

The similar problem of satisfiability of a fully quantified formula in
the 3-conjunctive normal form is well known to be PSPACE-complete. Since
PSPACE is closed under complement, {\sc Q3DNF-SAT} is PSPACE-complete
as well. Note that by an easy polynomial reduction, we can assume that
each 3-DNF clause contains exactly three literals.

We show the hardness in several iterative steps. First, in Section~\ref{sec:largePrecol}, we present a
new, simplified proof of the PSPACE-hardness of \onlineprecoloring in
which the sizes of both precolored and non-precolored parts of our
construction are linear in the size of the formula.

Then, in Section~\ref{sec:smallPrecol}, we strengthen the result by
reducing the size of the precolored part to be logarithmic in the size
of the formula.  This is achieved by adding linearly many
vertices to our construction.

Finally, in Section~\ref{sec:noPrecol}, we show how to remove one
precolored vertex and replace it by a non-precolored part, while
keeping the PSPACE-hardness proof valid. The cost for removing one
vertex is that the size of the graph is multiplied by a constant, but
since we apply it only logarithmically many times, we obtain a graph
of polynomial size and with no precolored vertex. This will complete the
proof of Theorem~\ref{thm:main}.

In our analysis, \P often uses the natural greedy algorithm \FF, which is ubiquitous
in the literature (see \cite{lst}, \cite{halldorsson00}):

\begin{definition}
The online algorithm \FF colors an incoming vertex $u$ using the smallest color not present 
among colored vertices adjacent to $u$.
\end{definition}

We remark that removing the last precolored vertex is the most difficult
part of proving PSPACE-hardness of \onlinechrom.
While there might be an easier way how to remove the penultimate
precolored vertex (using the last precolored vertex) and similarly
for previous precolored vertices,
for simplicity we use the same technique for removing
all precolored vertices as for removing the last precolored vertex.
Also, our technique can be used for any graph  
satisfying an assumption.

We note that if we would give \P some advantage like parallel edges,
precolored vertices (as in~\cite{kudahl14})
or something that helps \P distinguish different parts of the graph, the proof of PSPACE-hardness
would be much simpler. However, our theorem holds for simple graphs without any such help.


\section{Construction with a large precolored part}\label{sec:largePrecol}

Our first construction will reduce the PSPACE-complete problem
{\sc Q3DNF-SAT} to {\sc Online Coloring with Precoloring} with a large precolored part. Given a fully
quantified formula $Q$ in the 3-disjunctive normal form, we will
create a graph $G_1$ that will simulate this formula.
We assume that the formula contains $n$ variables $x_i, (1 ≤ i ≤ n)$
and $m$ clauses $C_a$, $(1 ≤ a ≤ m)$, and
that variables are indexed in the same order as they are quantified.

Our main resource will be a large precolored clique $K_{col}$ on $k$
vertices and naturally using $k$ colors; the number $k$ will be specified later.
Using such a precolored clique, we can restrict the allowed colors on a given
uncolored vertex $v$ by connecting it with the appropriate vertices in $K_{col}$,
i.e., we connect $v$ to all vertices in $K_{col}$ which do not have a color allowed for $v$.

For simplicity we use the precoloring in the strong sense, i.e., \P is
able to recognize which vertex in $K_{col}$ is which. We use this to
easily recognize colors.  However, it is straightforward to avoid the
strong precoloring by modifying the precolored part; for example by
creating $i$ independent and identical copies of the $i$-th vertex in
$K_{col}$, each having the same color and the same edges to other
vertices in $K_{col}$ and the rest of the graph. With such a
modification, \P would be able to recognize each color by the number of its
vertices in $K_{col}$. We also remark that working with
the strong precoloring is easier in the reduction,
and since we eventually obtain a graph without a precolored vertex,
it does not matter which precoloring we use.

Each vertex in $K_{col}$ thus corresponds to a color.
Colors used by \P are naturally denoted by numbers 1, 2, 3, ... $k$,
but we shall also assign meaningful names to them.

We now construct a graph $G_1$ that has the online chromatic
number $k$ if and only if the quantified 3-DNF formula can be
satisfied. See Figure~\ref{fig:gadgetexample} for an example of
$G_1$ and an overview of our construction. We use the following
gadgets for variables and clauses:

\begin{compactenum}\enumparams

\item For a variable $x_i$ which is quantified universally, we will
create a gadget consisting of \textit{\uvertices} $x_{i,t}$ and $x_{i,f}$,
connected by an edge. The vertex $x_{i,t}$ represents the positive literal $x_i$,
while $x_{i,f}$ represents the negative literal $\neg x_i$.
Both vertices have exactly two allowed colors:
$\t{set}_i$ and $\t{unset}_i$. If $x_{i,t}$ is assigned the color
$\set_i$, it corresponds to setting the variable $x_i$ to $1$, and
vice versa.

Note that if \D presents a vertex $x_{i,t}$ to \P,
\P is able to recognize that it is a vertex corresponding to the
variable $x_i$, but it is not able to recognize whether it is the vertex
$x_{i,t}$ or $x_{i,f}$.

\item For a variable $x_j$ which is quantified existentially, we will
create a gadget consisting of three \textit{\evertices}
$x_{j,t}$ (for the positive literal $x_j$), $x_{j,f}$ (for the literal $\neg x_j$)
and $x_{j,h}$ (the helper vertex), connected as a triangle.

Coloring of the first two variables also corresponds to setting the
variable $x_j$ to true or false, but in a different way: $x_{j,t}$ has
allowed colors $\set_{j,t}$ and $\unset_{j}$, while $x_{j,f}$ has
allowed colors $\set_{j,f}$ and $\unset_{j}$. We want to avoid both
$x_{j,t}$ and $x_{j,f}$ to have the color of type $\set$, and so the
``helper'' vertex $x_{j,h}$ can be colored only by $\set_{j,t}$ or
$\set_{j,f}$.

Note that the color choice for the vertices of $x_j$ means that if
\P is presented any vertex of this variable, \P can recognize it and decide
whether to set $x_j$ to $1$ (and color accordingly) or to $0$.

We call \evertices and \uvertices together \textit{variable vertices}.

\item For each clause $C_a$, we will add four
new vertices. 
\begin{compactenum}
\item First, we create a vertex $l_{a,i}$ for each literal in the clause,
which is connected to one of the vertices $x_{i,t}$ and $x_{i,f}$ corresponding to the sign of the literal.
For example if $C_a = (x_i ∧ ¬ {x_j} ∧ x_k)$, then $l_{a,i}$ is connected to $x_{i,t}$, $l_{a,j}$ is connected
to $x_{j,f}$ and $l_{a,k}$ to $x_{k,t}$. The allowed colors on a vertex $l_{a,i}$ are $\{f_a, \unset_i\}$.

\item Finally, we add a fourth vertex $d_a$ connected to the three vertices $l_{a,i}, l_{a,j}, l_{a,k}$.
This vertex can be colored only using the color $f_a$ or the color $\false_a$. The color
$\false_a$ is used to signal that this particular clause is evaluated to $0$.
If the color $f_a$ is used for the vertex $d_a$, this means that the clause is evaluated
to $1$, because $f_a$ is not present on any of $l_{a,i}, l_{a,j}, l_{a,k}$,
thus they have colors of type $\unset_i$ and their neighbors corresponding to literals
have colors of type $\set$.
\end{compactenum}

\item The last vertex we add to the construction will be $F$, a final vertex. The vertex
$F$ is connected to all the vertices $d_a$ corresponding to the clauses. The allowed colors
of the vertex $F$ are $\false_1, \false_2, \false_3, …, \false_{m}$. This final vertex
corresponds to the final evaluation of the formula. If all clauses are evaluated to $0$,
the vertex $F$ has no available color left and must use a new color.

\end{compactenum}

We have listed all the vertices and colors in our graph $G_1$ and the
functioning of our gadgets, but we will need slightly more edges. The
reasoning for the edges is as follows: If \D presents any vertex of
the type $l_{a,i}, d_a$ or $F$ before presenting the variable
vertices, or in the case when the variable vertices are presented out
of the quantifier order, we want to give an advantage to \P so it can
finalize the coloring.

This will be achieved by allowing \P to treat all remaining
\uvertices as \evertices, i.e., \P can
recognize which of the two \uvertices $x_{j,t}, x_{j,f}$ corresponds to
setting $x_j$ to $1$.

To be precise, we add the following edges to $G_1$:

\begin{compactitem}
\item Every \evertex $x_{j,t}, x_{j,f}, x_{j,h}$ is connected
to all previous \uvertices $x_{i,t}$, that is to all such
$x_{i,t}$ for which $i < j$.

\item Every \uvertex $x_{j,t}, x_{j,f}$ is connected
to all previous \uvertices $x_{i,t}$ such that $i < j$.
(We remark that these edges are not necessary, but make the following analysis simpler.)

\item Every vertex of type $l_{a,i}$ is connected to all the \uvertices
$x_{i't}$ for $i' ≠ i$. Note that $l_{a,i}$ is connected either to $x_{i,t}$, or to $x_{i,f}$;
we do not add a new edge to such vertices.

\item Every vertex of type $d_a$ is connected to all the \uvertices
$x_{i,t}$ for all $i$.

\item The vertex $F$ is connected to all the \uvertices
$x_{i,t}$ for all $i$.
\end{compactitem}

See Figure~\ref{fig:gadgetexample} for an example of our graph $G_1$.
We call all non-precolored vertices the \textit{gadgets} for variables and clauses.

\begin{figure}[!ht]
\centering\includegraphics[width=12cm]{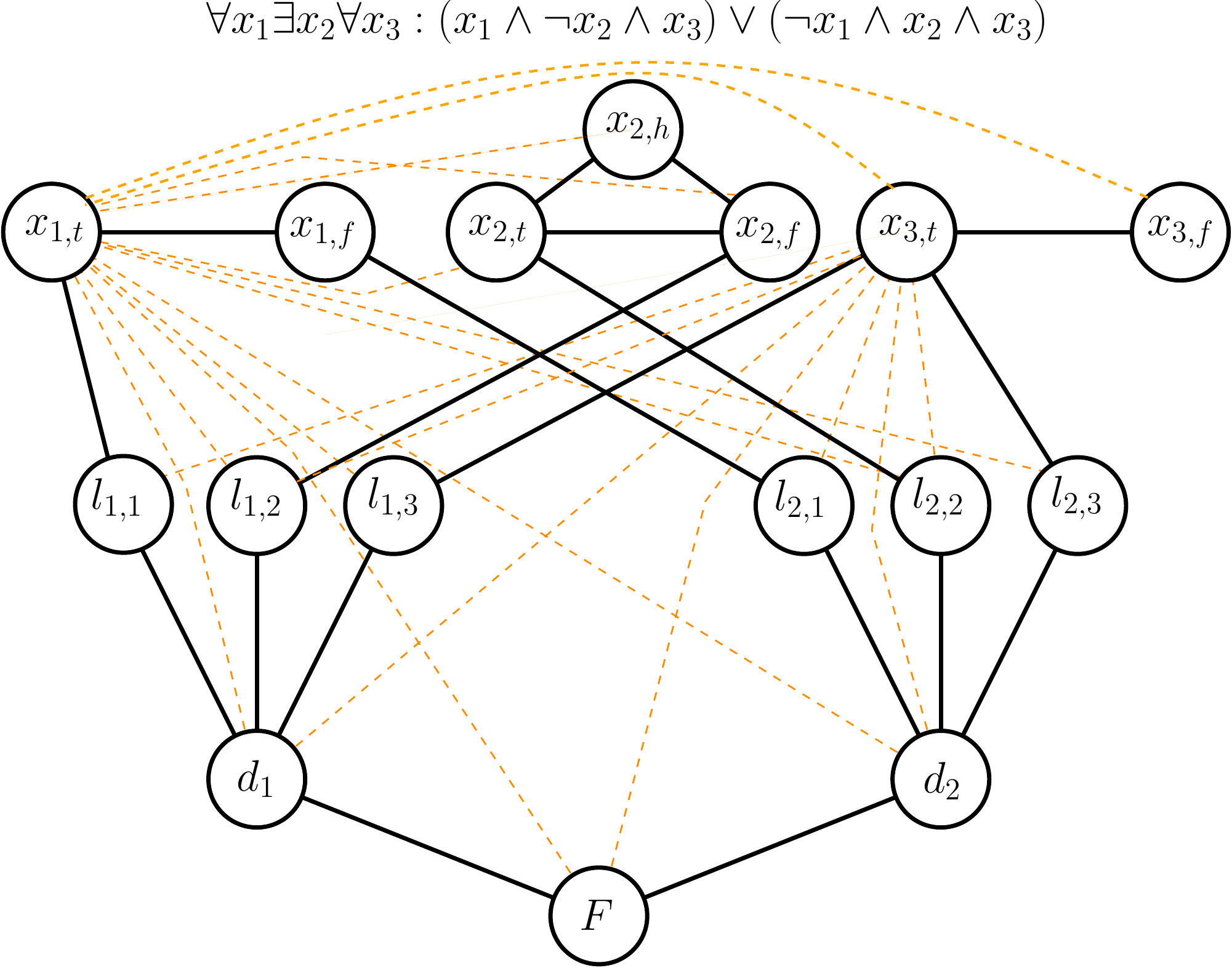}
\caption{The construction for a sample formula. The thick black edges are the normal
edges of the construction, and the dashed orange edges are the additional edges that
guarantee precedence of vertices. The lists of allowed colors of each vertex are not
listed in the figure.}
\label{fig:gadgetexample}
\end{figure}

It is easy to see that any two vertices outside $K_{col}$ have different sets of
allowed colors except \uvertices $x_{i,t}$ and $x_{i,f}$. 

The number of colors allowed for \P
(the same as the size of $K_{col}$) is $k = 2m+2n_{\forall}+3n_{\exists}$
where $m$ is the number of clauses, $n_{\forall}$ the number of universally quantified variables
and $n_{\exists}$ the number of existentially quantified variables.

The next two lemmas contain the analysis of our construction.

\begin{lemma}\label{l:precolNotSat}
For a given fully quantified formula in the 3-DNF form
that is not satisfiable, \D can force \P to use $k+1$ colors.
\end{lemma}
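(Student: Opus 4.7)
The plan is to have \D simulate a winning $\forall$-player strategy for the unsatisfiable Q3DNF formula while presenting the gadgets of $G_1$ in quantifier order, reducing the coloring game on $G_1$ to the quantified satisfiability game. After all variable gadgets are colored, the obtained assignment will falsify every clause, and then the clause and final gadgets will propagate this to force $F$ to need a fresh color.

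First, \D presents the variable gadgets in the order the variables are quantified. For a universally quantified $x_i$, \D sends the two \uvertices $x_{i,t}$ and $x_{i,f}$ (an edge) in two consecutive rounds. They share the allowed palette $\{\t{set}_i,\t{unset}_i\}$ and, by the key symmetry to be verified, they have exactly the same set of already colored neighbors, namely the \uvertices $x_{i',t}$ with $i'<i$. Hence \P cannot distinguish them and, unless it immediately spends a new color (in which case we are done), \P must assign $\t{set}_i$ and $\t{unset}_i$ to the pair in some order. \D, following the winning $\forall$-strategy, commits to the truth value of $x_i$; the commitment is deferred and is actually implemented only later, when some vertex has to attach to $x_{i,t}$ alone and \D picks the endpoint consistent with its chosen value. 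For an existentially quantified $x_j$, \D sends the triangle of three \evertices; their allowed palettes are pairwise distinct, so \P recognizes each of them, and the only two non-wasteful colorings of the triangle correspond to the two possible values of $x_j$, which \D simply records.

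Once all variable gadgets are colored, the induced assignment is the outcome of \D playing its winning $\forall$-strategy, so every clause $C_a$ contains at least one false literal. \D then presents, for each clause $C_a$, the three literal vertices $l_{a,i},l_{a,j},l_{a,k}$ followed by $d_a$, and finally the vertex $F$. The vertex $l_{a,\cdot}$ attached to a false literal has its only $\t{unset}$-option already taken by its variable neighbor, so \P is forced to use $f_a$ (or a new color, in which case we are done); hence at least one neighbor of $d_a$ carries $f_a$, so \P must color $d_a$ with $\t{false}_a$. Applying this to every clause colors every $d_a$ with $\t{false}_a$. When $F$ arrives, its whole allowed palette $\{\t{false}_1,\ldots,\t{false}_m\}$ is blocked by the $d_a$-neighbors, while $F$'s \uvertex neighbors carry colors outside this palette and are irrelevant; thus \P has no choice but to use the $(k+1)$-st color. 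The main obstacle is the symmetry claim for \uvertices: one has to check carefully, from the edge rules of Section~\ref{sec:largePrecol}, that the pair $\{x_{i,t},x_{i,f}\}$ is genuinely indistinguishable when presented and that \D can legally defer committing to which is which until forced by a later adjacency, invoking the observation that a deterministic \P cannot detect such deferred isomorphism choices; the rest is routine propagation through the clause gadgets.
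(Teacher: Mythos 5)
Your proposal is correct and follows essentially the same route as the paper's proof: present the variable gadgets in quantifier order, exploit the indistinguishability of the two \uvertices (which is exactly why the extra edges go only to $x_{i',t}$, never to $x_{i',f}$), let \D defer which endpoint is $x_{i,t}$ so that the forced assignment falsifies the formula, and then propagate through $l_{a,\cdot}$, $d_a$, and $F$ to force a $(k{+}1)$-st color. Your write-up is a bit more explicit than the paper about the symmetry verification and the deferred-isomorphism mechanism, but the underlying argument is the same.
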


\begin{proof}
If anytime during the game \P uses a color not
present in the color clique $K_{col}$, the lemma is proven. We
therefore assume this does not happen and show that the vertex $F$
cannot be colored with any color present in the precolored clique $K_{col}$.

\D's strategy is to first present the vertices $x_{*}$ in the
the order in which the variables are quantified in the formula. Whenever \D
sends a \uvertex $x_{i,t}$ or $x_{i,f}$, \P is not able to detect
whether it is setting the value of $x_i$ to $1$ or $0$. As the formula
is not satisfiable, \D can therefore present these vertices in
a sequence such that \P chooses the value of $x_i$ so that
the final evaluation is false.

After the vertices of the variables, \D will present the
vertices $l_{a,i}$ for each clause, then the vertex $d_a$ for each clause,
and finally the vertex $F$.

We now know that all clauses are evaluated to $0$. This means that at
least one of the vertices $l_{a,i}, l_{a,j}, l_{a,k}$ of each clause $a$ is assigned the
color $f_a$. It follows that the vertex $d_a$ of this clause must be
assigned the color $\false_a$. This holds for all clauses,
thus the vertex $F$ has a neighbor of the color $\false_a$ for each
clause $a$. Hence, $F$ cannot get any of the $k$ colors allowed and needs to be
assigned a new color, which completes the proof.
\end{proof}

\begin{lemma} \label{l:precolSat}
For a given fully quantified formula in the 3-DNF form
that is satisfiable, and for any order of sending vertices by \D, \P
has a strategy that uses $k$ colors. 
\end{lemma}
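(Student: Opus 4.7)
The plan is to describe an online strategy for \P that maintains a partial truth assignment $\phi$ of the variables, with the invariant that $\phi$ always extends to an assignment satisfying the formula. Such an assignment exists by hypothesis, so the $\exists$-player has a winning strategy $\sigma^*$ in the associated QBF game; \P's strategy will simulate $\sigma^*$, with modifications to handle out-of-order play by \D.

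Upon receiving a vertex $v$, \P first reads off the gadget type of $v$ from its edges to $K_{col}$, which uniquely identifies $v$ except for the symmetry between $x_{i,t}$ and $x_{i,f}$ in the pair of \uvertices of each $\forall$-variable $x_i$. The non-symmetric cases are routine: for an \evertex of variable $x_j$, \P consults $\sigma^*$ to fix $\phi(x_j)$ and assigns the corresponding $\set$ or $\unset$ colour; for a literal vertex $l_{a,i}$, \P uses $\unset_i$ if the literal is true under $\phi$ and $f_a$ otherwise; for a clause vertex $d_a$, \P uses $f_a$ if $C_a$ is true under $\phi$ and $\false_a$ otherwise; and for the final vertex $F$, \P uses $\false_{a^*}$ for any clause $a^*$ satisfied by $\phi$, which exists because $\phi$ satisfies the formula. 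The interesting case is a \uvertex $v$ of variable $x_i$. \P checks whether $v$ is adjacent to any previously coloured vertex outside of the gadget of $x_i$. By the extra edges in the construction, such a neighbour must be adjacent to $x_{i,t}$ but not to $x_{i,f}$, so the ambiguity is broken and \P identifies $v$; in this situation \P may freely choose $\phi(x_i)$ to keep $\phi$ extendable to a satisfying assignment. If no such adjacency exists, $v$ is genuinely indistinguishable from its counterpart, so \P assigns it $\set_i$ (or $\unset_i$ if its counterpart has already been coloured) and lets \D retroactively label the pair, committing $\phi(x_i)$ to whichever value \D chooses. This is safe because the formula is satisfiable for every $\forall$-assignment.

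The step I expect to be the main obstacle is to verify that this strategy is both well-defined and conflict-free under every play order of \D. Well-definedness requires, in particular, that whenever \P needs to commit the value of a previously uncommitted $\forall$-variable $x_i$ (because some non-\uvertex has already been coloured ``out of order''), the extra edges really do let \P identify $v$; I would verify this by going through the list of edges between \uvertices and the other parts of $G_1$, checking that every such early-played vertex is adjacent to $x_{i,t}$ and not to $x_{i,f}$. Conflict-freeness reduces to a short case check that the colour chosen for each vertex differs from the colours of its already-coloured neighbours; this follows from $\phi$ being a truth assignment that satisfies the formula, together with the tailored allowed-colour sets of each gadget (which even ensure that most gadget types use pairwise disjoint colour palettes). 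Since \P only ever uses a colour from $K_{col}$, the total number of colours used is at most $k$, completing the argument.
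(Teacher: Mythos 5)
Your overall strategy is the same as the paper's: simulate a winning QBF strategy while using the extra edges to disambiguate \uvertices that are sent out of quantifier order. However, there is a genuine gap in how you handle earlier \emph{unset} universally quantified variables when \P must color a vertex that already requires their values.

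Concretely, you write ``for an \evertex of variable $x_j$, \P consults $\sigma^*$ to fix $\phi(x_j)$,'' and ``for a literal vertex $l_{a,i}$, \P uses $\unset_i$ if the literal is true under $\phi$.'' Both instructions are ill-defined when some universally quantified $x_i$ with $i<j$ (respectively $x_i$ itself for $l_{a,i}$) has not yet been assigned: $\sigma^*$ is a function of the earlier universal choices, and the truth of the literal depends on $\phi(x_i)$. Your discussion of well-definedness is framed around identifying the \uvertex $v$ of $x_i$ \emph{when $v$ later arrives}, but by then the color of the out-of-order vertex has already been fixed and may have been chosen inconsistently. The fix — and this is exactly what the paper does — is to have \P \emph{immediately} commit $\phi(x_i)$ for every unset variable $x_i$ with $i<j$ at the moment the out-of-order vertex $u$ arrives: this is sound because $u$'s presence makes $x_{i,t}$ and $x_{i,f}$ distinguishable from then on, so \P is free to pick any value for the universal $x_i$ and will still be able to color the pair consistently. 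You have all the ingredients (you observe the disambiguation; you observe that any $\forall$-choice is safe), but the order of the two steps — commit-then-consult, at the time $u$ arrives, rather than identify-later — is missing, and without it your invariant ``$\phi$ always extends to a satisfying assignment'' is not actually maintained.

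A small secondary imprecision: the claim that an early-played neighbor of $v$ ``must be adjacent to $x_{i,t}$ but not to $x_{i,f}$'' is not quite right for $l_{a,i}$ when the literal is $\neg x_i$, in which case $l_{a,i}$ is adjacent to $x_{i,f}$ and not $x_{i,t}$. This does not break the disambiguation (\P knows the literal's sign from $l_{a,i}$'s allowed-color set), but it should be stated, as the paper does.
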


\begin{proof}
\P's goal is to use the knowledge of a satisfiable evaluation
to paint the vertices with few colors. We note the following two observations.
The first follows easily from the sets of allowed colors.

\begin{observation}
If \D presents a vertex, \P is able to recognize it by the
set of edges to the clique $K_{col}$, i.e., by the set of allowed colors,
with the exception of \uvertices $x_{i,t}, x_{i,f}$ for a universally quantified variable $x_i$.
\end{observation}

\begin{observation}
If \D presents a vertex of type $l_{a,i}, d_a$ or $F$ before any
of the \uvertices $x_{i,t}, x_{i,f}$ is presented for a universally quantified variable $x_i$, \P will be able
to distinguish the vertex $x_{i,t}$ from the vertex $x_{i,f}$ and therefore
choose the assignment of $x_i$.

Furthermore, for \uvertices $x_{i,t}, x_{i,f}$
and for a vertex $v$ of a variable $x_j$ with $i<j$, 
if \D presents $v$ before any of these \uvertices is sent, \P
is then able to distinguish the \uvertices $x_{i,t}$ and $x_{i,f}$.
\end{observation}

The second observation is easy to see by noting that presenting $v, l_{a,i}, d_a$ or $F$
means that it has an edge with $x_{i,t}$, but not with $x_{i,f}$
(or, in the case of $l_{a,i}$, it may have an edge with $x_{i,f}$, but not
with $x_{i,t}$, but this is symmetric). This means that we can
distinguish $x_{i,t}$ from $x_{i,f}$ by the presence or absence of an
edge.

We continue with the proof of the lemma.
We say that a variable $x_i$ is \textit{set}, if at least one of its vertices was colored
or if \P has assigned a value to it
(e.g., if a vertex of a variable $x_j$ was sent before a vertex of $x_i$ for $j > i$);
otherwise $x_i$ is \textit{unset}.
Until \D presents a vertex of type $l_{a,i},
d_a$, or $F$, \P colors an incoming vertex $v$ that corresponds to a variable $x_i$
using the following strategy:
\begin{compactitem}
\item Let $U$ be the set of all unset variables $x_j$ with $j<i$, i.e.,
all previous unset variables.
\P chooses an assignment for all variables in $U$
according to satisfiability of the formula.
Then \P can color vertices for variables in $U$ according to this assignment, 
since it is now able to distinguish \uvertices of each universally quantified variable in $U$.
Therefore every variable in $U$ becomes set.

\item If the variable $x_i$ is not set and it is quantified universally,
\P chooses an arbitrary allowed color for $v$ ($\set_i$ or $\unset_i$),
thus $x_i$ becomes set.

\item If $x_i$ is not set and it is quantified existentially,
\P knows how to set this variable to satisfy the formula,
thus \P colors $v$ according to the value of $x_i$, since $v$ can be recognized.

\item If $x_i$ is set and \P can recognize which vertex is $v$,
it colors $v$ according to the setting of $x_i$.

\item If $x_i$ is set and \P cannot recognize $v$,
then $x_i$ must be quantified universally and the other vertex for $x_i$ is colored,
therefore there is a single color left for $v$.
(Note that in this case no vertex for a variable $x_j$ with $j>i$ arrived.)
\end{compactitem}

When the vertex $u$ of type $l_{a,i}, d_a$, or $F$ is sent,
let $U$ be the set of all unset variables. 
\P chooses an assignment for all variables in $U$
according to satisfiability of the formula.
Then \P can color vertices for variables in $U$ according to this assignment, 
since it is now able to distinguish vertices of each universally quantified variable in $U$.
Then all variables are set and \P decides how to color all remaining vertices in the graph
using the following rules:
\begin{compactitem}
\item A variable vertex obtains color according to the setting of the corresponding variable.

\item A vertex of type $l_{a,i}$ gets $\unset_i$ if its adjacent variable vertex $x_{i,t}$ or $x_{i,f}$
has a color of type $\set$; otherwise it gets $f_a$

\item A vertex of type $d_a$ obtains color $f_a$
if all of its adjacent vertices of type $l_{a,i}$ have colors of type $\unset$,
i.e., the clause $a$ evaluates to $1$;
otherwise it obtains color $\false_a$.

\item The final vertex $F$ obtains a color $\false_a$ for a clause $a$ that
evaluates to $1$. Since \P set variables such that the formula is satisfied, there
must be such clause. 
\end{compactitem}

Now, \P colors an incoming vertex $v$ with the color
assigned to it using these rules. Hence, no matter in which order
\D sends the vertices, the graph is colored using $k$ colors as desired.
\end{proof}

\section{Construction with a precolored part of logarithmic size}\label{sec:smallPrecol}

We now make a step to the general case without precoloring by reducing the size
of the precolored part so that it has only logarithmic size.
Our construction is based on the one with
a large precolored part; namely, all the vertices $x_{i,t}, x_{i,f}, x_{j,t}, x_{j,f},
x_{j,h}, l_{a,i}, d_a, F$ (the gadgets for variables and clauses)
and the whole color clique $K_{col}$ will be connected the same way.
Let $G_1$ denote the gadgets for variables and clauses and $K_{col}$.

Since $K_{col}$ now starts uncolored and \D may send it after the gadgets,
we help \P by a structure for recognizing vertices in $G_1$ 
or for saving colors.

We remark that there is also a simpler construction with a logarithmic number of precolored vertices.
If we just add a clique of logarithmically many precolored vertices to recognize vertices in $G_1$, 
the following proof would work and be easier. However, when we replace a precolored vertex $v$ by some non-precolored graph
in Section~\ref{sec:noPrecol}, we will use some conditions that this simple construction would
not satisfy. Namely, we shall need that precolored vertices are not adjacent to $G_1$.

\subsection{Nodes}

Our structure will consist of many small \textit{nodes}, all of them have
the same internal structure, only their adjacencies with other vertices vary.

Each node consists of three vertices and a single edge;
vertices are denoted by $p_1, p_2, p_3$ and the edge leads between $p_2$ and $p_3$.
We call the vertices $p_1$ and $p_2$ the \textit{lower partite set} of the node,
$p_3$ form the \textit{upper partite set}. See Figure~\ref{fig:node} for an illustration of a node.
\begin{wrapfigure}{r}{2.8cm} 
\includegraphics[width=2cm,right]{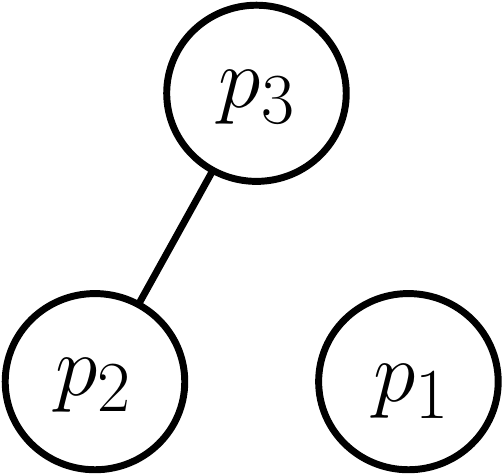}
\caption{Node}
\label{fig:node}
\end{wrapfigure}
Clearly, the online chromatic number of a node is two.

The intuition behind the nodes is as follows:

\begin{compactitem}

\item If \D presents vertices of a node in the correct
way, \P needs to use two colors in the lower partite set of the node.

\item No color can be used in two different nodes.

\item Each vertex $v\in G_1$ (in the gadgets and in $K_{col}$) has its own associated node $A$. 
If the vertex $p_3$ from $A$ does not arrive before $v$ is sent,
\P can color $p_3$ and $v$ with the same color, thus saving a color.
Otherwise, \P can use the node to recognize $v$.

\end{compactitem}

\begin{compactitem}
\item \uvertices $x_{i,t}, x_{i,f}$ for each universally quantified variable $x_i$
should be distinguishable only by the same vertices as in the previous section.
Therefore they are both associated with the same two nodes.

\end{compactitem}

Let $N$ be the number of vertices in $G_1$.
We create $N$ nodes, denoted by $A_1, \dots, A_{N}$, one for each vertex in $G_1$.
For any two distinct nodes $A_i$ and $A_j$ ($i\neq   j$), there is an edge between each vertex in $A_i$ and
each vertex in $A_j$. Therefore, no color can be used in two nodes.

We have noted above that each node is associated with a vertex; we now make the 
connection precise. Let $v_1, \dots, v_N$ be the vertices in $G_1$ (in an arbitrary order).
Then we say that $A_i$ \emph{identifies} the vertex $v_i$. Moreover, if $v_i$ is a vertex 
$x_{\ell,t}$ or $x_{\ell,f}$ for a universally quantified variable $x_\ell$
and $v_j$ is the other vertex, then $A_j$ also identifies $v_i$ and $A_i$ also identifies $v_j$.
Thus each node identifies one or two vertices and each vertex is identified by one or two nodes.

Edges between a vertex $v$ in the original construction $G_1$ and a node depend on
whether the node identifies $v$, or not.
For a vertex $v\in G_1$ and for a node $A$,
if $A$ identifies $v$, we connect only the whole lower partite set of $A$ to $v$, i.e.,
we add two edges from $v$ to both $p_1$ and $p_2$ of $A$.
Otherwise, we add three edges -- one between $v$ and every vertex in $A$.

\subsection{Precolored vertices}
The only precolored part $P$ of the graph is intended for distinguishing nodes.
Since there are $N$ nodes in total, we have $p = \lceil \log_2 N\rceil$ precolored vertices
$z_1, z_2, \dots z_p$ with no edges among them. 
Precolored vertices have the same color that may be used later for coloring $G_1$ (the gadgets and $K_{col}$).
For simplicity, we again use the precoloring in the strong sense, i.e., \P is able
to recognize which precolored vertex is which.

We connect all vertices in the node $A_i$ to $z_j$ if the $j$-th bit in the binary
notation of $i$ is 1; otherwise $z_j$ is not adjacent to any vertex in $A_i$.

Clearly, the node to which an incoming vertex belongs can be recognized by its adjacency to the precolored vertices.
Note that a vertex from nodes is connected to at least one precolored vertex
and there is no edge between $G_1$ and precolored vertices.

\medskip

So far, we have introduced all vertices and edges in our construction
of the graph $G_2$.
To summarize, our graph $G_2$ consists of three \textit{parts}:
\begin{compactenum}
\item\label{vertype:1} The graph $G_1$ from the previous section
consisting of the color clique $K_{col}$ and
of the gadgets for variables and clauses, i.e.,
vertices $x_{i,t}, x_{i,t}, x_{j,t}, x_{j,f}, x_{j,h}, l_{a,i}, d_a, F$;
\item\label{vertype:2} the nodes for recognizing vertices in $G_1$; 
\item\label{vertype:3} the precolored vertices $P$ for recognizing the nodes.
\end{compactenum}

See Figure~\ref{fig:logarithmic} for a simplified example of the graph $G_2$.
\smallskip

\begin{figure}[!ht]
\centering\includegraphics[width=10cm]{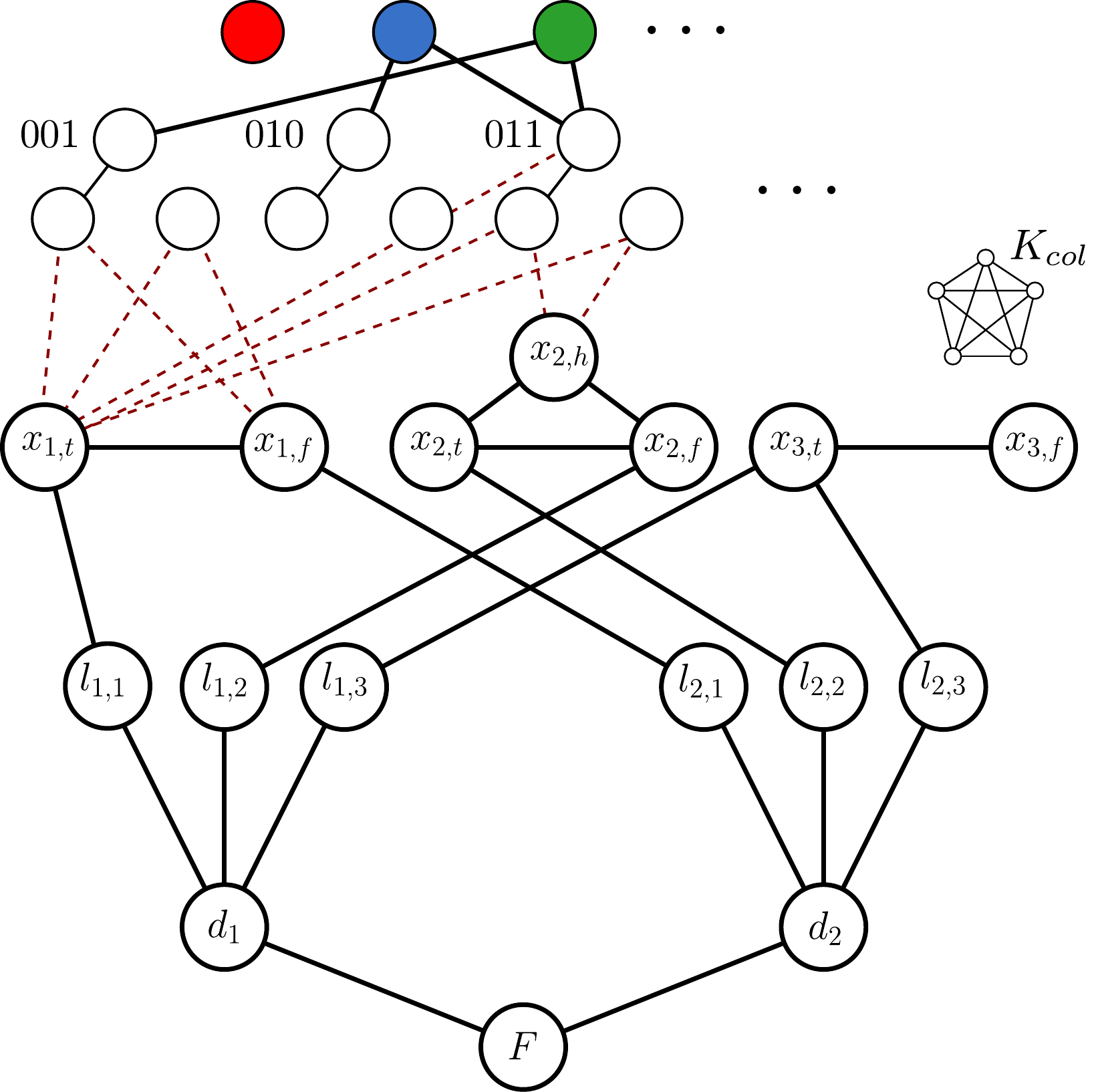}
\caption{A visualisation of the construction with logarithmically many precolored vertices.
Not all edges of the construction are present, e.g., dashed edges from Figure~\ref{fig:gadgetexample},
most edges between nodes and $G_1$, and precolored vertices are connected only to $p_3$ of a displayed node,
although they are connected either to all vertices in a node, or to none of them. The now uncolored clique $K_{col}$
is connected the same way as it was in Section~\ref{sec:largePrecol}.
Notice that $x_{1,t}$ and $x_{1,f}$ are identified by the first node (as they are connected
only to $p_1$ and $p_2$ of the first node), but $x_{1,t}$ is not identified by the third node
(as it is connected to all vertices of the third node).}
\label{fig:logarithmic}
\end{figure}

We start with an observation about nodes.

\begin{observation}\label{o:eachNode2Colors}
If \D reveals all the nodes $A_i$
before it sends any vertex from $G_1$ (the gadgets and $K_{col}$),
\D can force \P to use two colors in the lower partite set of each node
and these colors are different for each node.
\end{observation}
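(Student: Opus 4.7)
The plan is to give \D a per-node strategy: handle the nodes one at a time in an arbitrary order, so that for each node $A_i$, \D presents its three vertices in a way that forces two distinct colors on the lower partite set $\{p_1,p_2\}$. The nodes can then be analyzed independently because any two distinct nodes $A_i,A_j$ are joined by all nine cross edges; thus any color \P uses inside one node is automatically forbidden in every other node. Hence the \emph{different for each node} part of the statement comes for free, and the task reduces to forcing two distinct colors on the lower pair of a single node.

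For a single node $A_i$, I would have \D reveal the three vertices in the following order. First, \D presents an arbitrary vertex $w_1$. Second, \D presents a vertex $w_2$ declared adjacent to $w_1$; this is consistent with $\{w_1,w_2\}=\{p_2,p_3\}$, and forces \P to paint them with two distinct colors $c_1,c_2$. Third, \D presents the last vertex $w_3$ with no internal edge to either $w_1$ or $w_2$, which is consistent with $w_3=p_1$; \P assigns it some color $c_3$. Afterwards \D fixes the isomorphism by deciding which of $w_1,w_2$ is $p_2$ and which is $p_3$. Such a delayed decision is legitimate against a deterministic \P, as explicitly noted after the $P_4$ example in the introduction.

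The main subtlety is that the lower partite set $\{p_1,p_2\}$ carries no internal edge, so \P has no direct combinatorial reason to spend two colors on it and may try to cheat by setting $c_3$ equal to $c_1$ or $c_2$. The key point is that \D can always absorb this attempted saving on the upper partite vertex $p_3$: if $c_3=c_1$ she declares $p_2=w_2$ (color $c_2$), if $c_3=c_2$ she declares $p_2=w_1$ (color $c_1$), and if $c_3\notin\{c_1,c_2\}$ either identification works. In every case $p_1=w_3$ and $p_2$ end up with two distinct colors, and any reused color is paid for by $p_3$ rather than by $p_2$. Iterating this strategy over all $A_i$ and invoking the complete bipartite adjacency between distinct nodes yields the observation.
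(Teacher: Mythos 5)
Your proof is correct, and it takes a genuinely different route from the paper. The paper has \D present $p_1$ \emph{first}; then it sends a vertex $q$ that could be either $p_2$ or $p_3$ (both are non-adjacent to $p_1$ and have identical edges to every other node and to the precolored set, so \P cannot tell them apart at this stage of the game). \D then commits to the identity of $q$ \emph{after} seeing \P's color: if $q$ reuses $p_1$'s color, $q$ is declared to be $p_3$ and then $p_2$ must receive a fresh color; otherwise $q$ is declared to be $p_2$. You reverse the order: you reveal the adjacent pair $\{p_2,p_3\}$ first (roles undecided, which is legitimate since $p_2$ and $p_3$ are symmetric at this stage), forcing two distinct colors $c_1,c_2$, reveal $p_1$ last with color $c_3$, and then commit the $p_2/p_3$ labeling so that $p_2$'s color differs from $c_3$. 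Both strategies exploit \D's ability to delay the isomorphism decision against a deterministic \P. The paper's order means \D makes its adaptive choice on the \emph{second} vertex and only two of the node's vertices need to be analyzed; your order defers the adaptive choice to the very end, at the cost of a slightly longer case split on $c_3$, but with the benefit that the forced two-coloring of the edge $\{p_2,p_3\}$ is more transparent. Your handling of the \emph{different colors per node} part (via the complete bipartite graph between any two nodes) matches the paper.
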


\begin{proof}
First we see that \P is not able to distinguish incoming vertices from a node $A$
by their edges to vertices in other nodes, since there is a complete bipartite
graph between any two nodes.

Thus \D uses the following strategy for each node $A$ independently:
\D first presents the vertex $p_1$ from $A$;
let $c$ be its color. Then \D sends a vertex $q$ that is one of vertices $p_2$ and $p_3$.
\P cannot deduce which of them is $q$, because
both are not connected to $p_1$ (and both are connected to all vertices in other nodes). 
If \P assigns the color $c$ to $q$, then $q=p_3$ and
\D sends $p_2$ which must get another color than $c$.
Otherwise if $q$ obtains another color than $c$, then $q=p_2$.
\end{proof}

\P has $2N$ colors for the $N$ nodes and $k$ colors for $G_1$ (with the same names as in the previous section),
thus overall \P is allowed to use $k' = 2N+k = 2N + 2m+2n_{\forall}+3n_{\exists}$ colors
where $m$ is the number of clauses, $n_{\forall}$ the number of universally quantified variables
and $n_{\exists}$ the number of existentially quantified variables.
The precolored vertices have a color that may be used later in $G_1$. 

\begin{lemma} \label{l:smallPrecolNotSat}
For a given fully quantified formula $Q$ in the 3-DNF form
that is not satisfiable, \D can force \P to use $k'+1$ colors.
\end{lemma}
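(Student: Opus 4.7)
The plan is to combine Observation~\ref{o:eachNode2Colors} with the strategy of Lemma~\ref{l:precolNotSat}. \D will proceed in three phases: first reveal all $N$ nodes, then reveal the clique $K_{col}$, and finally reveal the gadget vertices in the same order as in Section~\ref{sec:largePrecol}.

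In the first phase, \D applies the strategy of Observation~\ref{o:eachNode2Colors} on each node independently, forcing \P to use two distinct colors in the lower partite set of every node. Because between any two distinct nodes the graph is complete bipartite, no color can appear in two different nodes, so this phase consumes exactly $2N$ distinct colors. Crucially, every vertex $v \in G_1$ is blocked from all of these $2N$ colors: if $v$ is identified by a node $A_i$, then $v$ is adjacent to the whole lower partite set $\{p_1, p_2\}$ of $A_i$, which already carries both colors of $A_i$; if $v$ is not identified by $A_i$, then $v$ is adjacent to all three vertices of $A_i$, which likewise exhaust both colors of $A_i$.

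In the second phase, \D reveals the $k$ vertices of $K_{col}$ in an arbitrary order. Since $K_{col}$ is a clique of size $k$ whose vertices are all blocked from the $2N$ node colors by the adjacencies just described, \P must use $k$ fresh colors on $K_{col}$. The adjacency pattern between $K_{col}$ and the gadget vertices is exactly as in Section~\ref{sec:largePrecol}, so whatever permutation \P chooses we may relabel these $k$ fresh colors as $\set_i,\unset_i,f_a,\false_a,\dots$ and recover the same ``allowed colors'' sets on every gadget vertex. In the third phase, \D replays the strategy of Lemma~\ref{l:precolNotSat}: present the variable vertices in quantifier order (exploiting unsatisfiability to force an assignment that falsifies every clause), then the literal vertices $l_{a,i}$, then each $d_a$, and finally $F$. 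As before, every $d_a$ is forced to take color $\false_a$, so no $K_{col}$ color is available for $F$; combined with the $2N$ node colors also forbidden on $F$, a $(k'+1)$-st color is required.

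The main obstacle to watch is whether the precolored vertices $z_1,\dots,z_p$ let \P save colors and thereby break the count. However, since all $z_j$ share a single color $c_P$ and are adjacent only to certain nodes (never to $G_1$), their sole effect is to forbid $c_P$ on some nodes; this can neither reduce the $2N$ distinct colors that \D forces across the nodes (that count comes from the pairwise complete bipartite edges between nodes and the argument of Observation~\ref{o:eachNode2Colors}), nor change the adjacencies from the nodes to $G_1$ that block these colors on all of $G_1$. Hence the counting from Lemma~\ref{l:precolNotSat} carries through unchanged, yielding the $k'+1$ color lower bound.
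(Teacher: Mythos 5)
Your proof follows the same three-phase strategy as the paper's: reveal all nodes and apply Observation~\ref{o:eachNode2Colors} to burn $2N$ distinct colors confined to nodes, then reveal $K_{col}$ to force $k$ fresh colors, then replay the adversary strategy of Lemma~\ref{l:precolNotSat}. Your accounting that every vertex of $G_1$ is adjacent to the whole lower partite set of every node (hence blocked from all $2N$ node colors), and your observation that the precolored vertices $z_j$ only touch nodes and so cannot rescue \P, are both correct and slightly more explicit than the paper's.

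There is one point you assume silently that the paper takes care to state and that your argument actually needs: replaying Lemma~\ref{l:precolNotSat} requires that \P \emph{still} cannot distinguish $x_{i,t}$ from $x_{i,f}$ for a universally quantified $x_i$ when one of them arrives. In $G_2$ the \uvertices have new edges to nodes, so a priori these could leak information. The reason they do not is that both \uvertices of a universally quantified variable are, by construction, identified by the same two nodes, so they have \emph{identical} adjacency patterns to every node (and, as you note, to the precolored vertices trivially). Without this check, the phrase ``exploiting unsatisfiability to force an assignment that falsifies every clause'' is not justified, since that forcing hinges precisely on \P's inability to tell $x_{i,t}$ and $x_{i,f}$ apart. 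Adding this one sentence closes the gap and makes your proof essentially the paper's.
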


\begin{proof}
If at any point of the game there are $k'+1$ colors in the graph,
the lemma is proven. 
\D's strategy is to first present all
nodes (in any order) and force to use two colors in the lower partite set
of each node. 
Moreover, \P has to use different colors in distinct nodes.
Forcing such a coloring is possible by Observation~\ref{o:eachNode2Colors}.

Then \D sends $K_{col}$ and the situation is similar to the one with the precolored $K_{col}$,
since none of two colors used in the lower partite set of a node is allowed for a vertex in $G_1$,
thus \P must use the $k$ colors in $K_{col}$ for coloring the gadgets.
\P is able to recognize vertices in $K_{col}$ by nodes, but nodes do not give \P
additional knowledge compared to $K_{col}$. In particular, \P is not able to distinguish
\uvertices $x_{i,t}, x_{i,f}$ for a universally quantified variable $x_i$ when one of them arrives.
We conclude the proof by applying Lemma~\ref{l:precolNotSat}.
\end{proof}

For a satisfiable formula nodes and precolored vertices become important.
We give another useful observation about nodes.

\begin{observation} \label{o:nodesTwoColoredByGreedy}
Suppose that \P is using \FF,
i.e., \P always assigns the smallest color not present
among colored vertices adjacent to the incoming vertex.
Then there are at most two colors used on each node.

Moreover, if vertices $p_1$ and $p_2$ from a node $A$ arrive before $p_3$ from $A$,
then $p_1$ and $p_2$ have the same color.
\end{observation}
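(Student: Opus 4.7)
The plan is to exploit the almost-identical neighborhoods of $p_1$, $p_2$, $p_3$ outside the node $A$. The first step is to record the structural fact $N(p_2) = N(p_1) \cup \{p_3\}$ and $N(p_3) \setminus \{p_2\} \subseteq N(p_1)$: every vertex outside $A$ that is adjacent to $p_3$ is also adjacent to both $p_1$ and $p_2$, and every outside neighbor of $p_1$ is a neighbor of $p_2$. This near-symmetry is what lets me compare the \FF-forbidden sets $S_1,S_2,S_3$ at the arrivals of $p_1,p_2,p_3$ and is the engine of everything that follows.

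Next I would prove the moreover clause first, since it is reused in the main claim. Assume both $p_1$ and $p_2$ arrive before $p_3$, and WLOG $p_1$ arrives first with color $c_1$. When $p_2$ arrives, $p_3$ is still uncolored and $p_1 \not\sim p_2$, so $S_2$ consists only of colors of already-colored vertices outside $A$. Every such vertex is also a neighbor of $p_1$, so none of them carries color $c_1$ (either it was in $S_1$ at $p_1$'s arrival, or it was colored afterwards while adjacent to $p_1$). Hence $c_1 \notin S_2$, while $\{1,\dots,c_1-1\} \subseteq S_1 \subseteq S_2$, so \FF assigns $c_1$ to $p_2$.

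For the main claim I would perform a three-way case analysis on which vertex of $A$ arrives last. The case ``$p_3$ last'' follows immediately from the moreover clause: $p_1$ and $p_2$ share a color, and $p_3$ is forced to a different one because $p_3 \sim p_2$. The case ``$p_2$ last'' is handled by the same comparison of forbidden sets as above: $c_1 \notin S_2$ and $\{1,\dots,c_1-1\} \subseteq S_2$, so $p_2$ receives $c_1$ and the node uses only $c_1$ and $c_3$.

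The main obstacle is the case ``$p_1$ last''. The difficulty is that a $G_1$-vertex identified by $A$ (adjacent to $p_1,p_2$ but not to $p_3$) may legitimately get the color $c_3$, which breaks the clean inclusion $S_2 \subseteq S_1$. I would first show $c_2 \notin S_1$ (every neighbor of $p_1$ outside $A$ is also a neighbor of $p_2$, hence cannot carry $c_2$), and then split on the order of $p_2$ and $p_3$. If $p_2$ arrives before $p_3$, then $p_3$ contributes nothing to $S_2$, so $S_2 \subseteq S_1$ and $\{1,\dots,c_2-1\} \subseteq S_1$, giving $p_1 = c_2$. If $p_3$ arrives first, I would sub-split on whether $c_3 \in S_1$: if yes, then still $\{1,\dots,c_2-1\} \subseteq S_1$ and $p_1=c_2$; if no, then $\{1,\dots,c_2-1\}\setminus\{c_3\} \subseteq S_1$ and \FF is forced to give $p_1$ the smaller of $c_2$ and $c_3$, which lies in $\{c_2,c_3\}$. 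In every branch $A$ carries at most two colors, finishing the proof.
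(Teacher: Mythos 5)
Your proof takes essentially the same route as the paper's: a case analysis on which of $p_1,p_2,p_3$ arrives last, driven by the fact that $p_1$ and $p_2$ have identical neighborhoods outside $A$ while $p_3$ is adjacent to a subset of them. You prove the \emph{moreover} clause first and reuse it; the paper folds that into its $q=p_3$ case. Your structural facts and the \uvertex sub-splits in the ``$p_1$ last'' case are all correct.

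There is, however, a genuine (if small) gap in the ``$p_2$ last'' case. You assert $c_1 \notin S_2$ and conclude $p_2$ receives $c_1$. At $p_2$'s arrival the forbidden set $S_2$ contains the colors of $p_2$'s already-colored neighbors, and $p_3$ is such a neighbor. Since $p_1 \not\sim p_3$, nothing prevents $c_3 = c_1$ (indeed $c_1 \notin S_3$ holds whenever $p_1$ precedes $p_3$, so $c_3 \le c_1$, and equality is possible when all colors below $c_1$ are already forbidden for $p_3$). In that sub-case $c_1 \in S_2$, $p_2$ receives some other color $c_2 \neq c_1$, and your stated conclusion ``$p_2$ receives $c_1$ and the node uses only $c_1$ and $c_3$'' is false as written. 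The Observation still holds there --- the node then carries $\{c_1, c_2\}$, still two colors --- but it needs its own one-line argument. The paper's proof of the $q=p_2$ case splits explicitly on $c = d$ versus $c \neq d$ for exactly this reason; you should add the same split.
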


\begin{proof}
Consider the last vertex $q$ from a node $A$ that is sent from $A$. We distinguish three cases:
\begin{compactitem}
\item $q = p_1$: Let the color of $p_2$ be $c$ and the color of $p_3$ be $d$ (clearly $c\neq d$).
Since the edges from $p_1$ and $p_2$ to other vertices in the construction except $p_3$
are exactly the same, \P can use $c$ for $p_1$, but
it cannot use any color forbidden for $p_2$ when $p_2$ was colored with the possible exception of $d$.
Hence $p_1$ obtains $c$ or $d$, but in both cases the node $A$ has two colors.
\item $q = p_2$: Let the color of $p_1$ be $c$ and the color of $p_3$ be $d$.
If $c\neq d$, \P uses $c$ for $p_2$ for the same reason as in the previous case.
Otherwise $c=d$ and \P must color $p_2$ with another color, but there are two colors on $A$.
\item $q = p_3$: We show that $p_1$ and $p_2$ have the same color. 
Without loss of generality, $p_1$ arrives first and obtains a color $c$.
When $p_2$ arrives, it can be colored by $c$, but by no other color $c' < c$,
since $c'$ would be available for $p_1$ also when $p_1$ was colored, because
the edges from $p_1$ and $p_2$ to other vertices except $p_3$ are the same.
Hence $p_1$ and $p_2$ have the same color and the lemma follows.
\end{compactitem}
\end{proof}

\P's strategy to win is basically the following: Color vertices in nodes greedily
until a vertex $u$ from $G_1$ arrives.
At this time, some (but maybe not all) vertices in the gadgets can be recognized by their nodes. \P uses the winning strategy from Lemma~\ref{l:precolSat}
on the gadgets for vertices that it can recognize, even if $K_{col}$ has arrived only partially. 
For each vertex $v$ that cannot be recognized, \P is able to color (or already colored) the lower partite set of the $v$'s node $A$
with only one color, therefore it can use the same color for $v$ and the vertex $p_3\in A$,
since they are not adjacent.

In the following proof, \P waits for two nonadjacent vertices in $G_1$,
even although one vertex from $G_1$ suffices. The reason is the we shall
need such a condition in Section~\ref{sec:noPrecol} when we remove precolored vertices.

\begin{lemma} \label{l:smallPrecolSat}
For a given fully quantified formula $Q$ in the 3-DNF form
that is satisfiable, and for any order of sending vertices by \D, \P
has a strategy that uses $k'$ colors.
\end{lemma}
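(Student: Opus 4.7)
The plan is to design Painter's strategy in two phases and show that the total number of distinct colors used never exceeds $k' = 2N + k$. In the first phase, Painter applies \FF to every incoming vertex until two nonadjacent vertices from $G_1$ have been revealed. By Observation~\ref{o:nodesTwoColoredByGreedy}, each fully arrived node then uses at most two colors; since any two distinct nodes span a complete bipartite subgraph, these colors are disjoint across nodes, so at most $2N$ ``node colors'' appear. When the second nonadjacent $G_1$ vertex arrives, the second phase is triggered.

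In the second phase, Painter commits to a satisfying assignment of $Q$ (guaranteed by hypothesis) and thereafter applies the rules from Lemma~\ref{l:precolSat} to color the $G_1$ vertices, spending only the $k$ semantic colors $\set_i$, $\unset_i$, $f_a$, $\false_a$, etc. The identifying nodes take over the role played by the precolored $K_{col}$ in Lemma~\ref{l:precolSat}: once the identifying node of a vertex $v\in G_1$ is fully revealed, Painter recognizes $v$ (up to the $x_{i,t}/x_{i,f}$ symmetry, which is resolved by the chosen assignment), and the coloring rules from Lemma~\ref{l:precolSat} apply verbatim.

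The crucial trick handles a $G_1$ vertex $v$ whose identifying node $A$ has not yet had its vertex $p_3$ revealed. For such an unrecognized $v$, Painter does \emph{not} spend a semantic color; instead it pairs $v$ with $p_3$ of $A$. By Observation~\ref{o:nodesTwoColoredByGreedy}, the vertices $p_1,p_2$ of $A$ (colored earlier under \FF) share a common color $c$; since $v$ is non-adjacent to $p_3$ by construction, Painter can color $v$ with a single reserved color $d\neq c$ and later color $p_3$ with $d$ as well. This keeps $A$ at two colors and uses no semantic color on $v$.

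The main obstacle will be verifying consistency under arbitrary interleavings of node and $G_1$ arrivals. One must check that (i) the reserved color $d$ used for an unrecognized $v$ never conflicts with any previously or later coloured neighbor---including neighbors in $G_1$ coming from the edges introduced in Section~\ref{sec:largePrecol} and the other identifying node in the universal-variable case when $x_{i,t}$ and $x_{i,f}$ share both identifying nodes; (ii) when $v$ later becomes fully recognizable, no retroactive conflict arises with the semantic color that the strategy of Lemma~\ref{l:precolSat} would have assigned; and (iii) the ``wait for two nonadjacent $G_1$ vertices'' rule is always feasible, since any single early $G_1$ vertex can be absorbed by the $p_3$-sharing trick before the assignment is committed. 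Once these verifications are complete, the final tally of at most $2N$ node colors plus at most $k$ semantic colors equals $k'$, completing the proof.
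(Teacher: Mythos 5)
Your high-level plan—wait for two nonadjacent $G_1$ vertices, use Observation~\ref{o:nodesTwoColoredByGreedy} to bound node colors by $2N$, run the Lemma~\ref{l:precolSat} strategy on recognized $G_1$ vertices, and pair each unrecognized $G_1$ vertex with the $p_3$ of its identifying node—is exactly the route the paper takes. But the write-up stops at the point where the real work begins, and the items you list as ``to be verified'' are not routine bookkeeping; they are the content of the proof.

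The most concrete gap is the handling of $G_1$ vertices that arrive and get colored by \FF in Phase~1. You say Painter ``applies \FF to every incoming vertex'' with (apparently) a single color pool, and then in Phase~2 ``applies the rules from Lemma~\ref{l:precolSat}''—but those Phase-1 $G_1$ vertices already carry \FF-assigned colors, which may bear no relation to what the Lemma~\ref{l:precolSat} strategy would have chosen. The paper resolves this in two steps you omit: (a) Algorithm~\GR uses \emph{disjoint} color pools $\C$ for nodes and $\calD$ for $G_1$, so node colors never leak into $G_1$ uncontrolled; and (b) when the phase switches, the colored part of $G_1$ is a \emph{clique} (since two nonadjacent $G_1$ vertices is the stopping condition), so its colors are pairwise distinct and can be bijected onto whatever the winning strategy uses—this is the role of the virtual graph $\virt{G}$ and the explicit color renaming in Algorithm~\WIN. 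Without that renaming device, your Phase-2 ``apply Lemma~\ref{l:precolSat} verbatim'' is not sound: the already-placed colors constrain the strategy in a way you never reconcile.

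Your items (i)--(iii) are also genuine, not verification footnotes. For (i), you must show a fresh node color is always available for an unrecognized $v$ that does not collide with any other vertex colored from $\C$ and is not already spent in $G_1$; the paper enforces this with the explicit rule ``smallest color from $\C$ not present among colored neighbors and not used in $G_1$,'' and dually reserves that color for the eventual $p_3$. For the $\forall$-vertex case where two nodes identify $x_{i,t}$ and $x_{i,f}$, the paper's \WIN rule on line~\ref{alg-ln:WIN-precol-nodesSaveColor} is written precisely so that one of the two $p_3$'s picks up the shared color; you flag the case but do not resolve it. And for (iii), ``any single early $G_1$ vertex can be absorbed by the $p_3$-sharing trick'' contradicts your Phase-1 rule of coloring everything by \FF—you cannot both greedily color and selectively pair. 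Until these are pinned down with an explicit algorithm (as the paper's \GR/\WIN pair does), the $2N+k$ tally is an aspiration, not a theorem.
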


\begin{proof}
Let $\C$ be the set of $2N$ colors for nodes
and let $\calD$ be the set of $k$ colors for $G_1$ (the gadgets and $K_{col}$)
including the color used on precolored vertices; both $\C$ and $\calD$ are ordered arbitrarily
and $\C \cap \calD = \emptyset$.

\P utilizes the precolored part $P$ to decide whether an incoming vertex belongs 
to $G_1$ (recall that $G_1$ is the part of $G_2$ not adjacent to any precolored vertex),
or to a node (and to which node). First, \P uses the following algorithm:

\algobox{
\textbf{Algorithm~\GR:}
For an incoming vertex $u$ sent by \D:
\begin{compactenum}
\item If there are two nonadjacent vertices $u_1$ and $u_2$ in $G_1$ that arrived,
stop the algorithm.
\item If $u$ is from nodes, assign $u$ the smallest color from $\C$ 
not present among colored neighbors of $u$.
\item Otherwise, if $u$ is from $G_1$, assign $u$ the smallest color from $\calD$
not present among colored neighbors of $u$.
\end{compactenum}
}

Note that the last $u$ considered by \GR is not yet colored and $u$ is from $G_1$.
Observe that only a clique is colored in $G_1$
and this clique has at most $k$ vertices.

Let $u_1$ and $u_2$ be the two nonadjacent vertices from the stopping condition of \GR.
Observe that the nodes identifying $u_1$ and $u_2$ are different,
since these nodes can be the same only for both \uvertices of one variable,
but these vertices are adjacent.

\P continues by the following algorithm. We remark that by ``\P can recognize $u\in G_1$''
we mean that \P knows which vertex in $G_1$ corresponds to $u$ unless
$u$ is one of the two vertices for a universally quantified variable $x_i$ --
then \P knows that one of $u = x_{i,t}$ and $u = x_{i,f}$ holds.

\algobox{
\textbf{Algorithm~\WIN:}
\P creates a virtual graph $\virt{G}$ with all colored vertices in $G_1$;
the colors of such vertices are inherited from the graph $G_2$.

\hskip 10pt \P shall simulate its winning strategy on $\virt{G}$ using colors from $\calD$.
Since only a clique is colored in $\virt{G}$, it renames colors
so that the colored vertices have colors according to the winning strategy on $\virt{G}$.
\P remembers the colors of vertices in the virtual graph $\virt{G}$. 

\medskip

For an incoming vertex $u$ sent by \D:
\begin{compactenum}
\item If $u$ is from $G_1$:

\item \label{alg-ln:WIN-precol-G1vertRecognized} \indentskip If there is a vertex $v$ in nodes that is not adjacent to $u$, %
then $v$ is from one of at most two nodes identifying $u$ and \P can recognize $u$.
A virtual \D sends $u$ to $\virt{G}$ and let $c\in \calD$ be the color that it obtains
by the winning strategy of \P on $\virt{G}$ by Lemma~\ref{l:precolSat}.
Color $u$ in $G_2$ using $c$ and call $u$ \textit{recognized}.
\item \label{alg-ln:WIN-precol-G1vertNotRecognized} \indentskip Otherwise, assign $u$ the smallest color from $\C$
not present among colored neighbors of $u$ and not used in $G_1$.

\item If $u$ is from a node $A$:

\item \label{alg-ln:WIN-precol-nodesSaveColor} \indentskip If a vertex $v$ from $G_1$ identified by $A$ is colored by $c$,
there is no edge between $v$ and $u$, $c\in \C$ and $c$ is not present among nodes,
then color $u$ using $c$. 
\item \indentskip Otherwise, assign $u$ the smallest color from $\C$
not present among colored neighbors of $u$.
\end{compactenum}
}

If $u$ is from $G_1$ and \P can recognize it by a nonadjacent vertex $v$ from nodes (line~\ref{alg-ln:WIN-precol-G1vertRecognized}),
then coloring it with the winning strategy is correct, since vertices that have a color from $\calD$ in $G_2$
have the same color in $\virt{G}$.

Otherwise, if \P cannot recognize $u$ from $G_1$ (line~\ref{alg-ln:WIN-precol-G1vertNotRecognized}),
the vertex $p_3$ from each node $A$ that serves for identifying $u$ is not yet sent.
It follows that there is at most one color used in the lower partite set of $A$
by Observation~\ref{o:nodesTwoColoredByGreedy}.
In this case, $u$ gets some color $c$ from $\C$ that is not used in $G_1$.
We want to show that $c$ will be used on the vertex $p_3$ from a node that identifies $u$.

To see this, observe that when the vertex $p_3$ from a node $A$ that identifies $u$ arrives,
the condition on line~\ref{alg-ln:WIN-precol-nodesSaveColor} is satisfied and $p_3$ obtains $c$ unless
the color $c$ is already used in nodes. 
Thus if the unrecognized vertex $u$ from $G_1$ 
is not a \uvertex, it has the same color as the vertex $p_3$ from the single node $A$ identifying $v$,
since $c$ can be used in nodes only for $p_3$ of the node $A$.

Otherwise, if $u$ is an unrecognized \uvertex, then $u$ is identified by two nodes $A$ and $A'$.
Both vertices $p_3$ from $A$ or $A'$ arrive after $u$ and one of them obtains $c$,
since the algorithm prefers to use colors of $v$ and possibly of the other \uvertex
for the same variable (if its color is in $\C$).

%
%

Recognized vertices from $G_1$ are colored by the wining strategy
on $\virt{G}$ using $k$ colors from $\calD$. As we have shown above, each unrecognized vertex $v$ has the 
same color as a vertex $p_3$ in one of the nodes identifying $v$. There are at most $2N$
colors used in nodes by Observation~\ref{o:nodesTwoColoredByGreedy} and all of them are from $\C$.
Thus \P uses at most $k' = 2N+k$ colors.
\end{proof}

\section{Removing precoloring}\label{sec:noPrecol}

In this section we show how to replace one precolored vertex by a large nonprecolored graph
whose size is a constant factor of the size of the original graph, while keeping \P's winning strategy 
in the case of a satisfiable formula. \D's winning strategy in the other case is of course
preserved as well and easier to see.
We prove the following lemma which holds for all graphs with precolored vertices satisfying
an assumption.

\begin{lemma}\label{l:removePrecolVertex}
Let $G$ be a graph with precolored subgraph $G_p$ 
and let $v_p\in G_p$ be a precolored vertex of $G$. 

Let $D$ be the induced subgraph with all nonprecolored vertices that are \textit{not} connected to $v_p$
and let $E$ be the induced subgraph with all nonprecolored vertices that are connected to $v_p$.

Let $k$ be an integer such that if $\chi^O(G)\leq k$, then
there exists a winning strategy of \P where 
\P colors $E$ using \FF before two nonadjacent vertices from $D$ arrive. Moreover,
in this case if \FF assigns the same color to a vertex in $D$ and to a vertex in $E$ before two nonadjacent vertices from $D$ arrive,
\P can still color $G$ using $k$ colors.

Then there exists an integer $k'$ and a graph $G'$ with the following properties:

\begin{compactitem}
\item $G'$ has only $|V(G_p)| - 1$ precolored vertices, 
and $|V(G')| \leq 25 |V(G)|$,
\item $G'$ can be constructed from $G$ in polynomial time with respect to the size of $G$,
\item it holds that $\chi^O(G')\leq k'$ if and only if $\chi^O(G)\leq k$.
\end{compactitem}
\end{lemma}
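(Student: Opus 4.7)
The plan is to replace the precolored vertex $v_p$ with a carefully designed non-precolored gadget $H$, producing a graph $G'$ that preserves all other precolored vertices of $G$ and reattaches the former neighbors of $v_p$ (the set $E$) to appropriate ``candidate'' vertices inside $H$. The gadget must have two essential properties: (i) \D can force \P to use a specific ``target color'' on a distinguished vertex $v_p^\ast\in H$ that will thereafter play the role of $v_p$; and (ii) \P, using \FF, can get through the first moves on $H$ without wasting colors and can later recognize which vertex of $H$ is $v_p^\ast$. The total number of colors $k'$ would be $k$ plus the overhead introduced by $H$ (a small constant), and I would bound $|V(H)|\le 24|V(G)|+1$ to meet the $25|V(G)|$ blow-up.

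My first step would be to define $H$ so that it looks locally symmetric to \P: place several mutually indistinguishable copies of $v_p$ connected through a small auxiliary substructure that forbids all but one of them from reusing the target color, together with enough extra vertices to guarantee that \D can always reach a state where exactly one such copy $v_p^\ast$ bears the target color while the rest play inert roles. All edges of $G$ incident to $v_p$ are redirected so that every candidate copy in $H$ inherits them (so that whichever copy ultimately becomes $v_p^\ast$ functions identically to the original precolored $v_p$ for the rest of $G$). I would also forbid edges between $H$ and $D$, matching the assumption that precolored vertices are nonadjacent to $D$ in the Section~\ref{sec:smallPrecol} construction, so that the \FF hypothesis continues to apply in $G'$.

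Next I would prove the two directions. For the direction $\chi^O(G)>k \Rightarrow \chi^O(G')>k'$, \D first plays entirely inside $H$, forcing the target color onto $v_p^\ast$ and using up the expected overhead of $k'-k$ extra colors; then \D continues with its winning strategy on $G$, where $v_p^\ast$ now serves as a genuine precolored vertex of the correct color. For the other direction, \P uses \FF on all incoming vertices until two nonadjacent vertices of $D$ have arrived; by the hypothesis of the lemma this interval suffices to color $E$ (and therefore all edges from $H$ to $G\setminus\{v_p\}$) under \FF, and the hypothesis further permits \FF to reuse colors between $E$ and $D$ during this phase. At the stopping moment, \P inspects the coloring of $H$, identifies $v_p^\ast$ and its color, and switches to the winning strategy of Lemma~\ref{l:precolSat} / Lemma~\ref{l:smallPrecolSat} on $G$, treating that color as the precoloring of $v_p$.

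The main obstacle is the design of $H$ together with the argument that \P's mandatory \FF play does not commit it to a coloring incompatible with the winning strategy that it will invoke later. In particular I must ensure that (a) whichever vertex ends up being $v_p^\ast$, the resulting ``effective precoloring'' agrees with what the winning strategy of Lemma~\ref{l:precolSat} expects; (b) any color that \FF happens to share between $H$ and parts of $E$ or $D$ before the switch is explicitly allowed by the lemma's second hypothesis; and (c) \D cannot derail the plan by interleaving moves inside $H$ with moves in $G\setminus\{v_p\}$, which forces $H$ to be robust under arbitrary orderings, not merely the canonical one. Getting a gadget that is simultaneously small (to keep the factor $25$) and rigid enough to meet all three requirements is the technical heart of the proof.
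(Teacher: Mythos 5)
Your high-level skeleton --- replace $v_p$ by a non-precolored gadget $H$, have \P run \FF until some trigger, then switch to simulating the winning strategy on $G$ --- matches the paper, but the gadget you describe differs in two non-cosmetic ways that would break the argument. You say $H$ is small and contributes only a ``small constant'' to $k'$. The paper's gadget (the \emph{supernode}) consists of three cliques $A,B,C$ each of size $S=8N$ with $N=|V(D\cup E)|$, where $B\cup C$ is a single clique of size $2S$ and $A$ is disjoint from it, and $k'=k+2S$, a \emph{linear} overhead. The size is not optional: since \D controls the order, with a small $H$ it can simply withhold $H$ and never give \P enough to orient itself. The huge cliques are what eventually force \D to expose two large practically disjoint cliques, and that is the event the paper's \WFD actually detects. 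Your proposed trigger (``two nonadjacent vertices of $D$ have arrived'') is not even checkable by \P before orientation is established --- \P cannot tell a $D$-vertex from an $H$-vertex, so the trigger is circular. The paper's trigger is phrased entirely in terms of clique sizes and practically-universal adjacencies, computable from the revealed graph.

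Second, you propose no edges between $H$ and $D$. The paper instead makes $D$ fully adjacent to $A$ and $B$ but not to $C$, and explicitly flags the $D$--$C$ non-adjacency as essential. This resolves your own obstacle (b): while running \FF, \P will inevitably assign some supernode colors to vertices of $D$, and those colors would wreck the budget if they had to be counted twice. Claim~\ref{clm:ColorsFromDinC} shows that any such color is eventually reused on some vertex of $C$, possible only because $D$ and $C$ are non-adjacent, so it stays inside the $2S$-color supernode budget. With $H$ and $D$ disconnected you lose this absorption mechanism entirely and have no way to control the greedy waste on $D$. Moreover, because $D$ is adjacent to $A$, there is a hidden clique $K_D\subseteq D$ of vertices indistinguishable from $A$, whose dynamic reclassification drives the bookkeeping of the algorithm \VIRT with its color classes $\C$, $\S$, $\calE$; your construction has no analogue of $K_D$ and hence no counterpart to the reconciliation argument that is the technical heart of the proof --- which is precisely where your proposal stops, at your obstacle (c).
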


Note that we do not assume anything about $D$ or $E$ except that $E$ may be colored by \P using \FF
before two nonadjacent vertices from $D$ arrive. 
We give a proof of Theorem~\ref{thm:main} using Lemma~\ref{l:removePrecolVertex} in
Section~\ref{sec:proofOfThm}.

\boldsection{Construction of $G'$.} Let $N$ be the total number of vertices in $D$ and $E$
and let $S = 8N$. 
Our graph $G'$ consists of precolored part $G_p' := G_p \setminus \{v_p\}$, graphs $D$ and $E$ and three huge cliques $A, B$ and $C$ of size $S$; cliques $A, B$ and $C$ together form a \textit{supernode}.
We keep the edges inside and between $D$ and $E$ and the edges between $G_p'$ and $D \cup E$ 
as they are in $G$.

We add a complete bipartite graph between cliques $B$ and $C$,
i.e., $B\cup C$ forms a clique of size $2S$.
No vertex in $A$ is connected to $B$ or $C$.
In other words, the supernode is created from a node by replacing each vertex by a clique of size $S$
and the only edge in the node by a complete bipartite graph.

There are no edges between the supernode (cliques $A$ and $B\cup C$) and any precolored vertex in $G_p'$. 
It remains to add edges between the supernode and $D\cup E$.
There is an edge between each vertex in $E$ and each vertex in the supernode,
while every vertex in $D$ is connected only to the whole $A$ and $B$,
but not to any vertex in $C$. The fact that $D$ and $C$ are not adjacent at all is essential in our analysis.
Our construction is depicted in Figure~\ref{fig:supernode}.

\begin{figure}[!ht]
\centering\includegraphics[width=4cm]{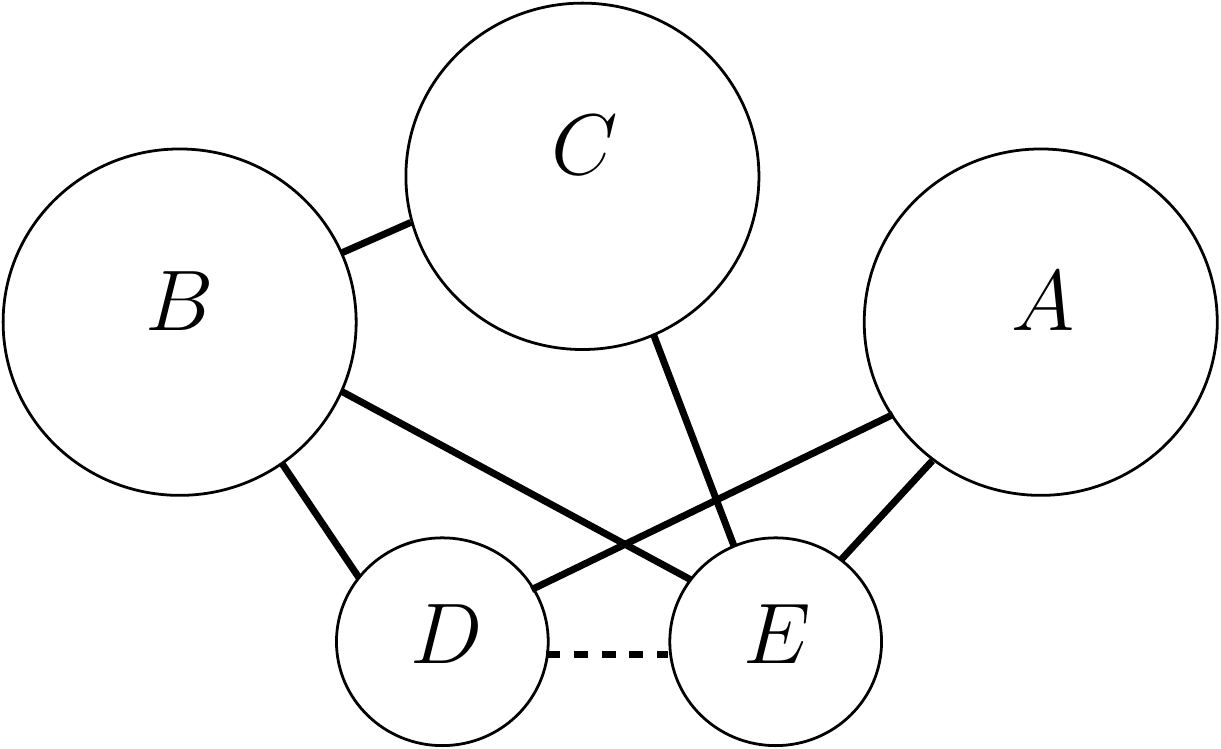}
\caption{An illustration of our construction of $G'$ where parts $A, B$ and $C$ form a supernode
and parts $D$ and $E$ form the original graph $G$ (the remaining precolored vertices are not shown).
A thick line connecting two parts of $G'$ corresponds to a complete bipartite graph between them.
The edges between $D$ and $E$ remain unchanged from the original graph $G$.
If there is no line between two parts, then there is no edge between them in $G'$.}
\label{fig:supernode}
\end{figure}

\begin{proof}[Proof of Lemma~\ref{l:removePrecolVertex}] Let
$G'$ be the graph defined as above.
Note that the number of vertices in $G'$ is at most $25 |V(G)|$,
$G'$ can be constructed from $G$ in polynomial time
and $G'$ has only $|V(G_p)| - 1$ precolored vertices. Therefore,
it remains to prove $\chi^O(G')\leq k'$ for some $k'$ if and only if $\chi^O(G)\leq k$.
We set $k'$ to $k + 2S$, since at most $2S$ colors will be used in the supernode.

We start with the ``only if'' direction which is easier.
Suppose that $\chi^O(G) > k$ and consider $G'$ constructed from $G$.
\D starts by sending $A$ and then $B\cup C$ such that no color is both in $A$ and in $B$;
this is possible, since cliques $A$ and $C$ have the same size.
More precisely, if \P would assign a color from $A$ to a vertex from $B\cup C$,
\D decides that the vertex is in $C$; otherwise the vertex is in $B$ (or in $C$ if every vertex in $B$ has been colored).
This forces \P to use $2S$ different colors on $A$ and $B$ and no such color can be used for $D$ and $E$.

Now the colored part of $G'$ (precolored vertices and the supernode)
may be viewed as a precolored part in $G$ except that \P
may not be able to use the supernode in the same way as a precolored vertex. Since this only helps \D,
\D sends vertices according
to its winning strategy for $G$. This proves the ``only if'' direction
of the third proposition of Lemma~\ref{l:removePrecolVertex}.

In the rest of this section we focus on the opposite direction:
assuming that $\chi^O(G)\leq k$, we show that \P can color $G'$
with $k'$ colors regardless of the strategy of \D.

In the following, when we refer to the colored part of $G'$, we do not
take precolored vertices into account.  \P actually does not look at
precolored vertices unless it uses its winning strategy for coloring
$G$ with $k$ colors.

\boldsection{Intuition.}
At the beginning \P has too little data to infer anything about the
vertices.  Therefore, \P shall wait for two nonadjacent vertices from
$D$ and for two large cliques (larger than $S/2$) with a small
intersection. Before such vertices arrive, it will color greedily.

Note that the greedy coloring algorithm eventually stops before everything is colored.
Having two large cliques, one mostly from $A$ and the other mostly from $B\cup C$,
and two nonadjacent vertices from $D$,
\P is able to recognize where an incoming vertex belongs.
Therefore, \P can use the supernode like a precolored vertex and colors the remaining vertices from $D$ and $E$ by
its original winning strategy on $G$.

This approach may fail if a part of $D$ is already colored by \P's
application of the greedy strategy. To remedy this, we prove that
colors used on $D$ so far are also used in $C$ or $E$, or will be used
on $C$ later.

The other obstacle is that \P might not be able to
distinguish between one clique from $D$ and vertices in $A$
if nothing from $B$ arrives. Nevertheless, each vertex $u$ in such
a ``hidden'' clique is connected to all other colored vertices
in $D$ and to the whole colored part of $E$, otherwise it would be distinguishable
from vertices in $A$. Hence, it does not matter which color $u$ receives
(since it is universal to colored vertices in $D$ and $E$).

In summary, the sheer size of the supernode should allow the
player \P to be able to use it as if it were precolored. Still,
we need to allow for some small margin of error. This leads us to the
following definition:

\begin{definition}\label{dfn:almost}
Let $N$ be the number of vertices of $D \cup E$ as in the construction of $G'$. For subgraphs $X, Y \subseteq G'$,
we say that $X$ is \textit{practically a subgraph} of $Y$ if $|V(X) \setminus V(Y)| \le N$,
and $X$ is \textit{practically disjoint} with $Y$ if $|V(X) \cap V(Y)| \le N$.

We also say that a vertex $v$
is \textit{practically universal} to a subgraph $X \subseteq G'$ if it
is adjacent to all vertices in $X$ except at most $N$ of
them. Similarly, we say a vertex $v$ is \textit{practically
independent} of a subgraph $X \subseteq G'$ if $v$ has at most $N$
neighbors in $X$.
\end{definition}

At first, the player \P uses the following algorithm for
coloring incoming vertices, which stops when it detects two
useful vertices $d_1$ and $d_2$:

\algobox{
\textbf{Algorithm~\WFD:}
For an incoming vertex $u$ sent by \D:
\begin{compactenum}
\item Let $G'_\mathrm{R}$ be the revealed part of $G'$ (i.e., colored vertices and $u$,
but not precolored vertices)

\item Find a maximum clique in $G'_\mathrm{R}$ and denote it as $K_1$.
\item Find a maximum clique in $G'_\mathrm{R}$ from those which are practically disjoint with $K_1$ and denote it as $K_2$.
\item\label{wfd:if} If $|K_2|\geq S/2$ and there are two nonadjacent vertices $d_1$ and $d_2$ in $G'_\mathrm{R}$
which are both \emph{not} practically universal to $K_1$ or both \emph{not} practically universal to $K_2$:
\item\indentskip Stop the algorithm.
\item Otherwise, color $u$ using \FF. 
\end{compactenum}
}

While the algorithm may seem to use a huge amount of computation for
one step, we should realize that we are not concerned with time
complexity when designing the strategy for \P. In fact, even a
non-constructive proof of existence of a winning strategy would be
enough to imply existence of a PSPACE algorithm for finding it -- we
have observed already in Section \ref{sec:intro} that \onlinechrom
lies in PSPACE.  \smallskip



Let $v$ be the incoming vertex $u$ when \WFD{} stops; note that $v$
is not colored by the algorithm and $v$ can be from any part of $G'$.

One of the cliques $K_1$ and $K_2$ is practically a subgraph of $B \cup C$ and we denote this clique by $K_{BC}$.
The other clique must be practically a subgraph of $A$ and we denote it by $K_{A}$.
(Keep in mind that both cliques may contain up to $N$ vertices from $D\cup E$.)
We remark that some vertices from $C$ must have arrived, as $A$ and $B$ alone are
indistinguishable in Step~\ref{wfd:if} of \WFD. By the same argument, the player
\P knows whether $K_1 = K_A$ or $K_1 = K_{BC}$.

Let $d_1$ and $d_2$ be the nonadjacent vertices that caused the algorithm to stop.
We observe that $d_1,d_2 \in D$ by eliminating all other possibilities:
\begin{compactitem}
\item Neither $d_1$ nor $d_2$ is from $E$, since any vertex of $E$ is practically universal to both cliques.
\item Vertices $d_1$ and $d_2$ cannot both be from $B\cup C$, nor can both be from $A$, as they would be adjacent.
\item If $d_1$ is in $B\cup C$ and $d_2$ in $A$ (or vice versa), then we have a contradiction with the fact that $d_1$ and $d_2$ are not practically universal to the same clique.
\item If $d_1\in D$ and $d_2$ would be from $A$ or $B$ (or vice versa), then $d_1$ and $d_2$ are adjacent.
\item Finally, if $d_1\in D$ and $d_2\in C$ (or vice versa), then the clique to which they are not practically universal
cannot be the same for both, since $d_1$ is universal to the whole $A$ and $d_2$ to the whole $B\cup C$.
\end{compactitem}


Having cliques $K_A$ and $K_{BC}$ and vertices $d_1,d_2 \in D$,
\P uses the following rules to recognize where an incoming or an already colored vertex $u$ belongs: 
\begin{compactitem}
\item If $u$ is practically universal to both $K_{BC}$ and $K_A$,
then $u\in E$.
\item If $u$ is practically universal to $K_{BC}$ and practically independent of $K_A$
and $u$ is adjacent to $d_1$, then $u\in B$.
\item If $u$ is practically universal to $K_{BC}$ and practically independent of $K_A$,
but there is no edge between $d_1$ and $u$, then $u\in C$.
\item If $u$ is not practically universal to $K_{BC}$, but it is practically universal to $K_A$, then $u\in A$ or $u\in D$.
\begin{compactitem}
\item Among such vertices, if there is a vertex not adjacent to $u$
or $u$ is not adjacent to a vertex in $E$
or $u$ is adjacent to a vertex in $B$, then $u\in D$;  
we say that such $u$ is \textit{surely in} $D$.
\item Otherwise, the player \P cannot yet recognize whether $u\in A$ or $u\in D$.
\end{compactitem}
\end{compactitem}

The reader should take a moment to verify that indeed, the set of rules covers all possible cases
for $u$.
\medskip

Let $\tilde{A}, \tilde{B}, \tilde{C}, \tilde{D}, \tilde{E}$ be the colored parts of $G'$ when \WFD{} stops.
We observe that in the last case of the recognition the vertices from $\tilde{D}$
which are indistinguishable from $A$ form a clique; we denote it by $K_D$.
Note that all vertices in $K_D$ are connected to all vertices surely from $D$ that arrived
and $K_D$ contains all vertices in $\tilde{D}$ that are not surely in $D$.
We stress that \P does not know $K_D$ or even its size.

\boldsection{Intuition for the next step.} To start, we give a few
extremal examples of which parts of the graph can be colored when \WFD terminates:
\begin{compactenum}
\item The whole $C$ and the whole $A$ are colored, but only a clique
from $D$ is colored.
\item Some pairs of nonadjacent vertices in $D$ are colored (or even
the whole $D$) and the whole $A$ is colored, but only $S/2$ vertices
from $B\cup C$ arrived.
\item There are again some pairs of nonadjacent colored vertices in $D$
and now the whole $C$ is colored, but only $S/2$ vertices from $A$ arrived.
\end{compactenum}
Moreover, in all cases, some part of $B$ and some part of $E$ may also be
colored.

Continuing with the intuition, as \P can now recognize the parts of
the construction (with an exception of $K_D$), we would like to use
the winning strategy for \P on $G$ and \FF on the rest.  More
precisely, \P creates a virtual copy of $G$, adds vertices into it and
simulates the winning strategy on this virtual graph.

Our main problem is that some part of $D$ (namely $\tilde{D}$) is already colored. We shall prove that if $\tilde{D}$ is not a clique,
\P can ignore colors used in $\tilde{D}$ (but not the colors that it will use on $D$), since
they are already present in $C$ or $E$ or they may be used later in $C$.
If $\tilde{D}$ is a clique, it may be the case that $C$ and $A$ arrived completely
and have the same colors, thus \P cannot ignore colors used on $\tilde{D}$.

Another obstacle in the simulation is $K_D$, the hidden part of $D$.
To overcome this, \P tries to detect vertices in $K_D$
and reclassify them as surely in $D$. 
\P shall keep that all vertices in $K_D$ are connected to all currently colored vertices in $D$ and $E$,
therefore it does not matter which colors vertices in $K_D$ receive.

When \P discovers a vertex from $K_D$, it adds the vertex immediately to its simulation of $G$.
On the other hand, the size of $K_D$ increases when \D sends a vertex from $D$ which is
indistinguishable from $A$.

\smallskip

After the algorithm \WFD finishes, the player \P applies an algorithm
which simulates its winning strategy on $G$ and maintains disjoint
sets of colors $\C$ (mainly for the supernode) and $\calE$ (for $E$ and vertices surely in $D$).
Recall that \P is not able to distinguish between $A$ and
$K_D$, thus vertices in $K_D$ are treated in the same way as vertices
in $A$. More precisely, we define three color sets $\C$, $\S$ and
$\calE$ as follows:
\medskip

\begin{definition}\leavevmode
\begin{compactitem}
\item If colored vertices surely in $D$ form a clique,
let $\S$ be the set of colors that are used on a colored vertex surely in $D$ 
and that are not used in $\tilde{C}$. 
Otherwise, if there are two nonadjacent colored vertices surely in $D$,
let $\S$ be an empty set.

\item Let $\C$ be the set of colors present currently in the supernode or in $K_D$
and among vertices surely in $D$ (i.e., in $\tilde{A}, \tilde{B}, \tilde{C}$, or $\tilde{D}$)
except colors from $\S$ and colors present also in $\tilde{E}$.
\item Let $\calE$ be the set of colors assigned to vertices in $\tilde{E}$ and colors in $\S$.
\end{compactitem}
\end{definition}

We will update the color sets when we apply our algorithms and as more
vertices arrive on input; the precise updating procedure is specified
throughout the algorithms. We shall keep that $\C$ and $\calE$ are disjoint.
The winning algorithm for \P is split into two parts, initialization and coloring:

\algobox{
\textbf{Algorithm \VIRTINIT:}
\begin{compactenum}
\item \P initializes a virtual graph $\virt{G}$ by copying all vertices from $\tilde{E}$;
the colors of such vertices are inherited from $G'$.

\item Next, update the virtual graph with vertices surely in $D$ as follows:
\item Set an arbitrary (virtual) ordering on already arrived vertices surely in $D$.
\item For every vertex $u$ in the ordering with a color $c$ in $G'$:
\item \indentskip If $c\in \S$, color $u$ in $\virt{G}$ using $c$.
\item \indentskip Otherwise, color $u$ in $\virt{G}$ using a color $c'$ from $\calE$ or a new color $c'$ not used in $G'$
according 
\item[] \indentskip to the winning strategy. Add $c'$ to $\calE$ if it is not there.
\end{compactenum}

}

When adding a colored vertex $u$ that is surely in $D$ to the virtual graph $\virt{G}$, 
if $c\in \S$, coloring $u$ with $c$ in  $\virt{G}$ does not harm the winning strategy by the assumptions of Lemma~\ref{l:removePrecolVertex} --- if $\S$ is nonempty,
colored vertices surely in $D$ form a clique and \P may use any proper coloring for them in the winning strategy.
Moreover, some colors may be both in $\tilde{D}$ and in $\tilde{E}$, but the winning strategy
still works by the assumptions. 

Otherwise, if $c\not\in \S$, $u$ may obtain a different color in $\virt{G}$ than in the actual $G'$.

\algobox{
\textbf{Algorithm \VIRT:}
For an incoming vertex $u$ sent by \D:
\begin{compactenum}
\item For each vertex $w$ colored with $c$ in $G'$ that is surely in $D$ now,
but that was indistinguishable from vertices in $A$ before $u$ arrived
(i.e., in $K_D$ before $u$ arrived):
\begin{compactenum}
\item Add $w$ to the virtual graph $\virt{G}$. 
\item As all colored vertices in $\virt{G}$ are connected to $w$ (otherwise it would not be in $K_D$),
$w$ would obtain a new color if we add $w$ to $\virt{G}$
and use the winning strategy of \P on $\virt{G}$.
\item If $c$ is not used in $C$, remove $c$ from $\C$, add $c$ to $\calE$
and color $w$ using $c$ in $\virt{G}$.
\item Otherwise, if $c$ is present in $C$, color $w$ in $\virt{G}$ using a new color $c'$
not contained in $\C$ and $\calE$ and add $c'$ to $\calE$. (In this case, $w$ has a different
color in $G'$ and $\virt{G}$, but \P pretends that its color is $c'$ for the purpose of simulation 
and does not use $c$ for vertices in $D\cup E$ in $G'$.)
\end{compactenum}
\item If $u$ is from $C$:
\begin{compactenum}
\item If there is a color $c$ among vertices surely in $D$ such that $c\in \C$ 
and $c$ is not present among colored vertices adjacent to $u$,
color $u$ using the smallest such color.
\item Otherwise, if there is a color in $\C$ not present among colored vertices adjacent to $u$,
color $u$ using the smallest such color.
\item Otherwise, choose a new color $c$ not contained in $\C$ and $\calE$, color $u$ with $c$, add $c$ to $\C$.
\end{compactenum}
\item If $u$ is from $A$ or $B$ or from $K_D$, i.e., indistinguishable from $A$:
\begin{compactenum}
\item If there is a color in $\C$ not present among colored vertices adjacent to $u$,
color $u$ using the smallest such color.
\item Otherwise, choose a new color $c$ not contained in $\C$ and $\calE$, color $u$ with $c$, add $c$ to $\C$.
\end{compactenum}
\item If $u$ is from $E$ or surely from $D$:
\begin{compactenum}
\item The virtual \D sends $u$ to $\virt{G}$. Simulate the winning strategy on $\virt{G}$
and color $u$ using a color $c\in \calE$ or a new color $c$ that is not in $\calE$ or $\C$;
in the latter case, add $c$ to $\calE$.
\item Color $u$ in $G'$ using the same color $c$.
\end{compactenum}
\end{compactenum}
}

We remark that the first vertex $u$ colored by the algorithm \VIRT is $v$, the vertex on which \WFD{} stops.
In a sense, if $\tilde{D}$ (the part of $D$ colored by \WFD) is not a clique,
\P pretends that in $\tilde{D}$ there are colors from the simulation,
not the colors assigned by \WFD{}.
Otherwise, if $\tilde{D}$ is a clique, \P uses colors from $G'$ in $\virt{G}$ 
which may result in renaming the colors in the winning strategy on $\virt{G}$.

Observe that any color used in $G'$ is in $\C$ or $\calE$ and that these sets of colors are disjoint.
Thus the colors used for $D\cup E$ are different
from those in the supernode except colors used in $\tilde{D}$
or colors of vertices that were in $K_D$ when they arrived
(which may happen also during the execution of \VIRT).

Note that if $u$ is from $E$ or surely from $D$, coloring with $c$ is sound; this
follows from the winning strategy on $G$ and from the fact that colors in $\calE$ are not
used in the supernode (recall that colors in $\S$ are not used in $\tilde{C}$).

The key part of the analysis is captured by the following claim and its proof which
explains the design of \VIRT{}.

\begin{claim}\label{clm:ColorsFromDinC}
For any color $c\in \C$ it holds that $c$ is used on a vertex in the supernode.
\end{claim}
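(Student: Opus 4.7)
I would prove the claim as an invariant maintained throughout the execution of \VIRTINIT and \VIRT, by induction on the number of vertices that have been processed so far. The base case is the moment right after \WFD terminates (note that \VIRTINIT only modifies $\calE$ and the virtual graph $\virt{G}$, leaving $\C$ untouched). The inductive step then checks every branch of \VIRT and verifies that each one preserves the property.

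\textbf{Base case.} By definition, $\C$ consists of the colors that appear on $\tilde A \cup \tilde B \cup \tilde C \cup \tilde D$, minus those in $\S$ and those appearing on $\tilde E$. Colors on $\tilde A \cup \tilde B \cup \tilde C$ trivially lie on the supernode. The content of the base case is therefore to show that every color $c$ appearing on some $w \in \tilde D$ with $c \notin \S$ and $c$ not on $\tilde E$ must also appear on $\tilde A \cup \tilde B \cup \tilde C$. If $w$ is surely in $D$ and $\S$ is nonempty, this is essentially the definition of $\S$: in the clique case $\S$ collects exactly the colors on surely-in-$D$ vertices not in $\tilde C$, so $c \notin \S$ forces $c$ to appear on $\tilde C$. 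The more delicate situations are $w \in K_D$ and the $\S$-empty case; for these I would exploit the fact that $D$-vertices are not adjacent to $C$ together with the greedy rule of \FF used in \WFD: since $w \in D$ could freely reuse any color present on $\tilde C$ at the time of its coloring, a careful analysis of \FF's choices (combined with the fact that $\tilde C$ is nonempty when \WFD terminates) shows that any color currently used on $\tilde D$ but not in $\S \cup \tilde E$ must already be mirrored on $\tilde C$.

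\textbf{Inductive step.} Assume the invariant holds before the next call to \VIRT, and consider each branch:
\begin{compactitem}
\item The prefatory reclassification of a vertex $w$ that has just moved from $K_D$ to surely-in-$D$: if $w$'s color $c$ is not used in clique $C$, \VIRT explicitly removes $c$ from $\C$ (shrinking $\C$ only strengthens the invariant); otherwise $c$ is on a vertex in $C \subseteq$ supernode and the invariant holds trivially.
\item $u \in C$: any reused color was on the supernode by induction hypothesis and is now also on $u \in C$; any newly created color is immediately added to $\C$ and placed on $u \in C$. In both cases, every color in $\C$ remains on the supernode.
\item $u \in A \cup B$: analogous to the $C$ case, since $u$ sits in the supernode.
\item $u \in K_D$: the delicate case. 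Here $u$ is colored by the greedy rule over $\C$. I would argue that this branch never creates a fresh color: since $u$ is not adjacent to $\tilde C$, and $\tilde B$ is empty whenever $K_D$ is nonempty (because $B$-arrival would immediately distinguish $D$ from $A$), and since $|\tilde C| \ge S/2 - N \gg N$, there is always a color on $\tilde C$ that is not blocked by $\tilde A$, $\tilde E$, or $\tilde D$-vertices adjacent to $u$. Hence the ``new color'' branch is never triggered for a $K_D$-vertex, and the invariant is preserved.
\item $u \in E$ or surely in $D$: the chosen color lies in $\calE$, which is disjoint from $\C$; $\C$ does not change.
\end{compactitem}

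\textbf{Main obstacle.} The crux of the argument is the $K_D$ branch, both in the base case (where one must verify that the colors placed on $K_D$-vertices by \FF during \WFD are in fact already on $\tilde C$) and in the inductive step (where one must verify that the greedy rule in \VIRT always finds a reusable color on $\tilde C$). Both parts rely on the non-adjacency $D$-$C$ together with the large supernode size $S = 8N$; if the invariant were to fail here, a fresh color could enter $\C$ without being backed by a supernode vertex, breaking the bound $|\C| \le 2S$ that is needed to conclude that \P uses at most $k' = k + 2S$ colors overall.
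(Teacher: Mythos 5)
Your plan is to establish the claim as a running invariant ("at every moment of \VIRTINIT and \VIRT, each color in $\C$ already sits on a supernode vertex") and to verify a base case right after \WFD terminates. This invariant is genuinely stronger than the claim, and it is false; the base case in particular does not hold. The paper's proof explicitly treats the claim as a statement about the final coloring, and one of its main cases (the $|\tilde C|\le S/2$ case) shows that a color $c$ may sit only on vertices of $\tilde D$ when \WFD stops and is only later pushed onto $C$ by \VIRT (which prefers colors from $\tilde D$ when coloring $C$). Your phrase that "a careful analysis of \FF's choices \ldots shows that any color currently used on $\tilde D$ but not in $\S\cup\tilde E$ must already be mirrored on $\tilde C$" is exactly where the argument would break: that statement is not true.

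A concrete counterexample to your base case: let \D reveal all of $A$ first (\FF uses colors $1,\dots,S$), then two nonadjacent vertices $d_1,d_2\in D$. They are adjacent to all of $A$ and to nothing else colored, so \FF gives both the fresh color $S{+}1$. Then \D reveals $S/2$ vertices of $C$; since $C$ is nonadjacent to $A$ and to $D$, they receive colors $1,\dots,S/2$. Now $K_1=A$, $K_2\subseteq C$ with $|K_2|=S/2$, and $d_1,d_2$ are nonadjacent and not practically universal to $K_2$, so \WFD terminates. At that moment $\S=\emptyset$ (since $d_1,d_2$ are nonadjacent), $\tilde E=\emptyset$, and color $S{+}1\in\C$ appears on $d_1,d_2$ but on no supernode vertex. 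The claim only becomes true because \VIRT later colors enough of $C$ (using the preference rule for colors appearing on vertices surely in $D$) to absorb $S{+}1$ into $C$. So the correct structure of the argument must track "eventual" placement of such colors on $C$, not a time-invariant; a by-induction-on-steps proof with your base case cannot be repaired by a more careful local analysis, because the thing being carried as an invariant is not an invariant. (Your inductive-step branches are mostly sound once the right statement is being maintained, but your reliance on "the $K_D$ branch never creates a fresh color" is also doing more work than necessary: the paper's reclassification step in \VIRT already handles $K_D$ vertices that received new colors, by later moving that color to $\calE$ or finding it in $C$.)
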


\begin{proof}
Any color added to $\C$ by \VIRT and not removed from $\C$ in Step 1.(c)
is used in the supernode and no color used in $E$
is in $\C$, thus it remains to show that any color $c\in \C$ used in $\tilde{D}$
(either on a vertex surely from $D$ or on a vertex in $K_D$ when \VIRT{} starts)
is present in $C$ when the whole $G'$ is colored.

We assume that $c$ is not used in $\virt{G}$ and thus also not in $\tilde{E}$,
otherwise $c$ is not in $\C$. Colors in $\tilde{D}$
and not in $\virt{G}$ (thus also not in $\S$) are in $\C$,
thus \VIRT can use $c$ only for a vertex in $C$. 
Let $u$ be the first vertex in $\tilde{D}$ that obtains color $c$ and
let $t$ be the time of coloring $u$.

Note that if $u$ is in $K_D$ at any time, then we are done, since
$u$ becomes surely in $D$ at some time (e.g., when the first vertex from $B$ arrives)
and at that time, either $c$ is already used in $C$, or
\VIRT uses $c$ also in $\virt{G}$ and removes it from $\C$.
Also, if colored vertices surely from $D$ form a clique when \VIRT starts
and $u$ is surely in $D$ at that time,
then $c$ is either used in $\tilde{C}$, or it is in $\S$ and thus not in $\C$.

Therefore we assume that $u$ is not in $K_D$ at any time, thus $u$ is surely in $D$, and 
there are two nonadjacent colored vertices surely in $D$ when \VIRT starts.
This implies $|K_{BC}| = |S|/2$ or $|K_{A}| = |S|/2$ when \VIRT{} starts which means that
$|\tilde{C}|\leq |S|/2$ or $|\tilde{A}|\leq |S|/2$.

If $|\tilde{C}|\leq |S|/2 < |S| - N$, then at least $N$ vertices from $C$ are colored 
by \VIRT{}. In this case, \VIRT{} must assign $c$ to a vertex in $C$ at some point,
since it prefers colors used in $\tilde{D}$ before
colors in $A$ or new colors and $N \geq |\tilde{D}|$.

Otherwise $|\tilde{C}| > |S|/2$, thus $|\tilde{A}|\leq |S|/2$.
If at time $t$ there is some color in $C$ not used in $A$,
let $c'$ be the smallest such color in the ordering of colors.
We observe that \WFD{} uses the color $c'$ for $u$,
since other colors allowed for coloring $u$ are some colors used in $E$ at time $t$
(which we assume that $u$ does not get), or colors in $D$ (which $u$ also does not get as it is the
first vertex in $D$ that has color $c$), or a new color not yet used anywhere,
but any new color is after $c'$ in the ordering of colors.
Hence the greedy algorithm \WFD{} prefers $c'$ and $c = c'$, thus $c$ is already used in $C$.

Otherwise, when \P colors $u$, all colors used in $C$ are also present in $A$
which also means that there are at least as many colored vertices in $A$ as in $C$ at time $t$.
Let $r$ be the number of colored vertices in $C$ at time $t$.
\WFD colors more than $|S| - N - r$ vertices from $C$ after time $t$,
at most $|S|/2 - r$ of them are colored by a color already used in $A$,
thus after time $t$ \WFD colors more than $|S| - N - r - (|S|/2 - r) = |S|/2 - N$ vertices
from $C$ which do not get a color from $A$. Hence \WFD must use $c$ for a vertex in $C$ as $|S|/2 - N \geq N \geq |\tilde{D}|$.
This concludes the proof of the claim.
%
%
%
%
\end{proof}

There are always $2S$ colors used in the supernode, since otherwise
there must be a color $c$ in $A$ that is not in $B\cup C$, but
both \WFD and \VIRT prefer coloring $B$ using colors in $A$ 
and coloring $A$ using colors from $B\cup C$. Thus $|\C| = 2S$ by Claim~\ref{clm:ColorsFromDinC}.

Since all colors in $\calE$ are used in the virtual graph $\virt{G}$ (this includes also colors in $\S$)
and \P uses the winning strategy on $\virt{G}$, we know that $|\calE|\leq k$.
Overall, \P uses at most $2S+k = k'$ colors and we proved $\chi^O(G')\leq k'$
if $\chi^O(G)\leq k$.
This concludes the proof of Lemma~\ref{l:removePrecolVertex}. 
\end{proof}

\subsection{Proof of the main theorem}\label{sec:proofOfThm}

We show how to apply Lemma~\ref{l:removePrecolVertex} on the construction from Section~\ref{sec:smallPrecol}
and prove the PSPACE-completeness of \onlinechrom.

\begin{proof}[Proof of Theorem~\ref{thm:main}]
Let $\phi$ be a formula of size $n$. We first construct a graph $G_2$ with $p = \O(\log n)$ precolored vertices
$z_1, z_2, \dots z_p$ as described in Sections~\ref{sec:largePrecol} and~\ref{sec:smallPrecol}.
By Lemmas~\ref{l:smallPrecolSat} and~\ref{l:smallPrecolNotSat} 
we have $\chi^O(G_2)\leq k'$ if and only if $\phi$ is satisfiable. 
Then for each precolored vertex in $G_2$, we apply Lemma~\ref{l:removePrecolVertex} iteratively
until we obtain a graph $G_3$ with no precolored vertex such that
$\chi^O(G_3)\leq k''$ iff $\phi$ is satisfiable (for some $k''$).

The number of vertices in $G_3$ is polynomial in $n$, because $G_2$ has linearly many vertices
and the number of vertices is multiplied by a constant with each of $\O(\log n)$ applications of
Lemma~\ref{l:removePrecolVertex}. The constructions in Sections~\ref{sec:largePrecol} and~\ref{sec:smallPrecol}
and in Lemma~\ref{l:removePrecolVertex} yield a polynomial-time algorithm for computing $G'$ from $\phi$.


It remains to check the assumption of Lemma~\ref{l:removePrecolVertex} in each iteration $i$;
we recall it here: If $\chi^O(G)\leq k$, then
there exists a winning strategy of \P where
\P colors $E$ using \FF before two nonadjacent vertices from $D$ arrive. Moreover,
in this case if \FF assigns the same color to a vertex in $D$ and to a vertex in $E$ before two nonadjacent vertices from $D$ arrive,
\P can still color $G$ using $k$ colors.

Let $H_0$ be $G_2$ and for each iteration $i$
let $H_{i-1}$ be the graph before removing the $i$-th precolored vertex $z_i$ and
let $H_i$ be the graph created from $H_{i-1}$ by the construction in Lemma~\ref{l:removePrecolVertex}.
Let the nonprecolored vertices of $H_{i-1}$ be partitioned into two disjoint induced subgraphs $D_i$ and $E_i$
such that all vertices from $E_i$ are connected to $z_i$ and no vertex in $D_i$ is connected to $z_i$.
We denote by $A_i, B_i$ and $C_i$ the cliques in the supernode in $H_i$.
Thus $H_i$ consists of parts $A_i, B_i, C_i, D_i$, and $E_i$
(and possibly some precolored vertices which we do not take into account).

Note that $E_i$ contains only some nodes and $D_i$ contains the whole $G_1$
(i.e., $K_{col}$ and the gadgets for variables and clauses)
together with supernodes from previous iterations and nodes that are not in $E_i$.
We also remark that for $i > 1$ parts $A_{i-1}, B_{i-1}$, and $C_{i-1}$ of $H_{i-1}$
are in $D_i$ and that there may be some nodes
both in $D_i$ and in $E_{i-1}$ and also some nodes both in $E_i$ and in $D_{i-1}$.

We start with the first iteration.
The first part of the assumption holds for $H_0 = G_2$,
since the algorithm \GR from the winning strategy from Lemma~\ref{l:smallPrecolSat} colors the nodes
by \FF before two nonadjacent vertices from $G_1$ arrive
and $D_1$ contains $G_1$. 

For the ``moreover'' part in the first iteration,
if \FF assigns the same color $c$ to a vertex $d$ in $D_1$ and a vertex $e$ in $E_1$
before two nonadjacent vertices from $D_1$ arrive, 
then $e$ must be from a node that identifies $d$ and in this case \P can just pretend that
$d$ is an unrecognized vertex. In other words, the color $c$ is intended for nodes
and not used in $G_1$.


For an iteration $i>1$, note that in the winning strategy on $H_{i-1}$
\P colors $H_{i-1}$ using \FF before two large and practically disjoint cliques from the supernode arrive,
one mostly from $A_{i-1}$ and the other mostly from $B_{i-1} \cup C_{i-1}$.
Since there are many pairs of nonadjacent vertices between these cliques
and these cliques are contained in $D_i$, \P can color $E_i$ using
\FF before two nonadjacent vertices from $D_i$ arrive.

For the ``moreover'' part, if \FF assigns the same color $c$ to a vertex $d$ in $D_i$ and a vertex $e$ in $E_i$
before two nonadjacent vertices from $D_i$ arrive,
then $e$ is from nodes (as $E_i$ contains only nodes)
and $d$ is either from $G_1$ as in the first iteration,
or from the supernode created in the previous iteration;
in the latter case, $e$ is in $D_j$ and $d$ is in $C_j$ for $j<i$,
since otherwise they would be connected.
(Note that $d$ cannot be from nodes, since there is a complete bipartite graph
between every two nodes and a node is never split between $E$ and $D$.)
In both cases, this does not harm the winning strategy of \P on $H_{i-1}$, 
because the algorithm \VIRT assumes that \FF may use the same colors for
more vertices in $D_{i-1} \cup E_{i-1}$ or in $D_{i-1}\cup C_{i-1}$.
This concludes that the assumption of Lemma~\ref{l:removePrecolVertex}
is satisfied in each iteration.
\end{proof}

%


\bigskip

\textbf{Acknowledgments.} The authors thank Christian Kudahl
and their supervisor Ji\v{r}\'{i} Sgall for useful discussions on the problem.

{\small

}

\end{document}